\newif\ifproceedings\proceedingstrue
\SetMathAlphabet{\mathcal}{normal}{OMS}{lmsy}{m}{n}
\SetMathAlphabet{\mathcal}{bold}{OMS}{lmsy}{m}{n}
\DeclareMathAlphabet{\mathsc}{OT1}{cmr}{m}{sc}
\newtheorem{theorem}{Theorem}[section]
\newtheorem{lemma}[theorem]{Lemma}
\newtheorem{definition}[theorem]{Definition}
\newenvironment{denseitemize}{
\begin{itemize}
    \setlength{\itemsep}{1pt}
    \setlength{\parskip}{0pt}
    \setlength{\parsep}{0pt}
}{\end{itemize}}
\newcommand\bi{\begin{denseitemize}}
\newcommand\ei{\end{denseitemize}}
\newcommand\ben{\begin{enumerate}}
\newcommand\een{\end{enumerate}}
\newcommand{\trm}[1]{\textrm{#1}}
\newcommand{\tbf}[1]{\textbf{#1}}
\newcommand*\dash{\ifvmode\quitvmode\else\unskip\kern.16667em\fi---%
\hskip.16667em\relax}
\newcommand{\pr}{\textrm{Pr}}
\newcommand{\A}{\mathcal{A}}
\newcommand{\N}{\mathbb{N}}
\newcommand{\F}{\mathbb{F}}
\newcommand{\secp}{\lambda}
\newcommand{\usecp}{1^{\secp}}
\newcommand{\stateinfo}{\mathsf{state}}
\newcommand{\randpick}{\mathrel{\xleftarrow{\$}}}
\newcommand{\inputs}{\mathsf{inputs}}
\newcommand{\outputs}{\mathsf{outputs}}
\newcommand{\pk}{\ensuremath{pk}}
\newcommand{\sk}{\ensuremath{sk}}
\newcommand{\round}{\mathsf{rnd}}
\newcommand{\maxr}{\mathsf{max}}
\newcommand{\tx}{\mathsf{tx}}
\newcommand{\dest}{\mathsf{to}}
\newcommand{\data}{\mathsf{data}}
\newcommand{\amount}{\mathsf{amt}}
\newcommand{\formtx}{\mathsf{FormTx}}
\newcommand{\verifytx}{\mathsf{VerifyTx}}
\newcommand{\setup}{\mathsf{Setup}}
\newcommand{\verify}{\mathsf{Verify}}
\newcommand{\prove}{\mathsf{Reveal}}
\newcommand{\coord}{\mathtt{Update}}
\newcommand{\privatestate}{\stateinfo_{\mathsf{priv}}}
\newcommand{\pubstate}{\stateinfo_{\mathsf{pub}}}
\newcommand{\target}{\mathsf{target}}
\newcommand{\txset}{\mathsf{txset}}
\newcommand{\nonce}{\mathsf{nonce}}
\newcommand{\participants}{\mathcal{P}}
\newcommand{\afrac}{f_a}
\newcommand{\globalrandao}{R}
\newcommand{\randao}{h}
\newcommand{\seed}{\ensuremath{s}}
\newcommand{\deposit}{\mathsf{Commit}}
\newcommand{\reveal}{\mathsf{Reveal}}
\newcommand{\commit}{\mathsf{Commit}}
\newcommand{\barrier}{\mathsf{barrier}}
\newcommand{\participantsnumber}{n}
\newcommand{\hashmax}{H_{\mathsf{max}}}
\newcommand{\txdep}{\tx_{\mathsf{com}}}
\newcommand{\txrev}{\tx_{\mathsf{rev}}}
\newcommand{\id}{i}
\newcommand{\commitlist}{\mathtt{c}}
\newcommand{\dealercon}{\texttt{Dealer}\xspace}
\newcommand{\scrapecon}{\texttt{Scrape}\xspace}
\newcommand{\caucuscon}{\texttt{Caucus}\xspace}
\newcommand\sarahm[1]{}
\newcommand\saraha[1]{}
\newcommand{\sysname}{Caucus\xspace}
\title{Winning the Caucus Race: Continuous Leader Election via Public Randomness}
\author{Sarah Azouvi, Patrick McCorry and Sarah Meiklejohn\\
\vspace{2mm}
University College London, United Kingdom}
\date{}
\begin{document}

\maketitle

\setlength\epigraphwidth{0.9\linewidth}
\setlength\epigraphrule{0pt}
\epigraph{
\emph{There was no `One, two, three, and away,' but they began running when they
liked, and left off when they liked, so that it was not easy to know when the
race was over.}
}{\dash Lewis Carroll, \emph{Alice's Adventures in Wonderland}\\
\emph{A Caucus-Race and a Long Tale}}

\begin{abstract}

Consensus protocols inherently rely on the notion of leader election, in
which one or a subset of participants are temporarily elected to authorize and
announce the network's latest state.  While leader election is a well studied 
problem, the rise of distributed ledgers (i.e., blockchains) has led to a new 
perspective on how to perform large-scale leader elections via solving a 
computationally difficult puzzle (i.e., proof of work). 
In this paper, we present \sysname, a large-scale leader election protocol 
with minimal coordination costs that does not require the computational cost
of proof-of-work.  We evaluate \sysname in terms of its security, using a new 
model for blockchain-focused leader election, before testing an 
implementation of \sysname on an Ethereum private network.  Our experiments 
highlight that one variant of \sysname costs only \$0.10 per leader election 
if deployed on Ethereum. 

\end{abstract}

\section{Introduction}

%A \emph{consensus protocol} is responsible for allowing participants in a distributed system to agree on its current state and the latest actions taken; e.g. whether a certain transaction is valid or not. 
%At heart, many consensus protocols rely on a \emph{leader election protocol}, in which one participant or subset of participants is elected as the new leader with the authority to make decisions. 

One of the central components of any distributed system is a \emph{consensus
protocol}, by which the system's participants can agree on its current
state and use that information to take various actions; e.g., decide whether
or not to accept a certain transaction as valid.  At heart, many consensus
protocols rely on a \emph{leader election} protocol, in which one
participant or subset of participants is chosen to lead these types of
decisions for a single round (or set period of time).
To elect a leader, participants must coordinate amongst themselves by
exchanging several rounds of messages. 

In settings such as those exemplified by distributed ledgers\dash and in
particular by open blockchains such as Bitcoin\dash it is not
clear how one could adopt traditional leader election protocols, as the
desired level of decentralization amongst participants makes 
them appear inappropriate. 
As a result, the consensus protocol most often used in this
setting is \emph{proof-of-work} (PoW), which in some sense replaces the
coordination costs of traditional protocols with the high computational cost
of finding the partial pre-image of a hash, or solving some other
cryptographic puzzle.  Even in this seemingly leaderless consensus protocol,
however, there is still an implicit notion of leadership, as the first 
participant to solve the puzzle effectively wins the authority to decide 
which transactions are accepted into the blockchain. 

One natural area to explore in the setting of distributed ledgers is
therefore the extent to which a compromise can be made between the
coordination costs of traditional leader election protocols and the
computational costs of PoW-based protocols, and indeed in recent years this
has been an active area of
research~\cite{CCS:MXCSS16,CCS:LNZBGS16,NDSS:DanMei16,iohk-pos,snowwhite,
algorand,casper}.
While some protocols attempt this compromise by inserting more traditional
trust models and consensus protocols into the decentralized setting, others
attempt to remove the computational cost of PoW without requiring any
coordination costs.  Instead, these consensus protocols replace the (explicit)
investment of computational cost in PoW-based mining with some (implicit)
represented investment in the health of the system on behalf of participants.
As such, they are often referred to as \emph{proof-of-stake} (PoS).

Given the lack of explicit computational cost, PoS-based protocols are
subject to new types of attacks, and therefore careful and detailed analysis
of both their leader election protocols\dash which, in contrast to PoW, must be
made explicit\dash and their incentive mechanisms.  While previous papers on
PoS often try to solve both of these problems at once, in this paper we focus
solely on the question of leader election.  By addressing just this well
defined question (`how to securely agree on a leader without having to
coordinate?'), we are able to model, analyze, and evaluate our proposed leader
election protocol in a more rigorous manner than would be possible if trying
to present a full PoS solution. We leave the incentives design for future work that requires
a paper on its own.

With this in mind, we begin in Section~\ref{sec:model} with a model for
leader election protocols that is inspired by existing models for coin-tossing
protocols but takes into account the unique requirements of a
blockchain-based setting.  In particular, we consider (1) \emph{liveness},
which is an
essential requirement for any distributed protocol; (2) \emph{fairness}, as
blockchains often rely on explicit incentives in the form of a reward when
elected leader, so it is important to consider how often this happens and if
it can be biased; and (3) different flavors of \emph{unpredictability}, as if
an adversary could predict either their own eligibility or that of another
participant then it could launch attacks accordingly (in particular a
\emph{grinding} or a DoS attack, respectively).

Next, in Section~\ref{sec:construction} we present \sysname, our leader
election protocol.  Briefly, \sysname works by combining a source of private
randomness (used to hide one's eligibility from other participants until it
is appropriate to reveal it) with a source of public randomness (used to
guarantee fairness).  To create the public randomness, we begin by looking to 
existing protocols for random beacons~\cite{randhound,iohk-pos,scrape}.  To 
create the private 
randomness, we simply allow each participant to do so themselves, and then 
ask them to publicly commit to it by revealing the head of a hash chain.  

Our experiments in Section~\ref{sec:impl} highlight that even the cheapest
random beacon we could find, SCRAPE~\cite{scrape}, costs over \$270 to 
generate in an on-chain Ethereum-based implementation, whereas a hybrid 
version that moves some computation off-chain costs under \$80.  This 
motivated us to create a random beacon inspired by
RanDAO~\cite{vitalik-randomness}, that combines both public and private
sources of randomness.  This new solution relies on a
somewhat stronger trust assumption (that at least one honest participant has
contributed to the beacon) but costs only \$0.10 per leader election.
We also discuss in Section~\ref{sec:improvements} how a leader
election protocol like \sysname could be used to bootstrap from a PoW- to a
PoS-based consensus protocol, although again we leave a detailed analysis of 
this problem and our solution as important future work, that requires an incentives model
and analysis.

In summary, our paper makes the following concrete contributions:

\begin{itemize}
\item We present the first model for leader election that is designed to
capture the specific security requirements associated with blockchains;

\item We present a leader election protocol that, in addition to satisfying
more traditional notions of security, ensures that leaders cannot be
subject to DoS attacks because their eligibility is revealed ahead of time;
and

\item We present Ethereum-based implementations of both our leader
election protocol and its underlying random beacon that are cheap enough to
be deployed today.
\end{itemize}

\section{Related Work}

The work that most closely resemble ours spans from leader election protocols
in the distributed computing literature to more recent proof-of-stake
proposals in the cryptography literature.

In the distributed systems literature, the leader election problem consists of
picking one player (or processor) out of $n$, considering that a fraction $b$
of them are \emph{bad} (faulty or malicious).  Most of the proposed solutions
work in the full information model~\cite{collectivecoinflipping}, where parties
are computationally unbounded and every communication is broadcast.  The
first leader election protocol is due to Saks~\cite{saks-leaderelection}, who
proposed a baton-passing algorithm resilient to a $\Theta(n/\log(n))$
threshold of bad players.  Since then, various protocols have improved on
either the communication complexity or fraction of malicious
players~\cite{alon-naor-election,orv-leaderelection,russel-leaderelection,
feige-leaderelection,king-leaderelection}.

In the cryptography literature, a number of recent papers have proposed
proof-of-stake protocols~\cite{iohk-pos,snowwhite,algorand}.
Briefly, some of the
techniques are similar, but many of the proposals either rely solely on a
source of global randomness (meaning they do not achieve as strong a notion
of unpredictability), or do not use any global randomness (meaning they do
not achieve as strong a notion of fairness).
In the Snow White proof-of-stake protocol~\cite{snowwhite}, the source of
randomness is taken as some $\nonce$ in a previous block in the blockchain,
and the leader is then any participant with public key $\pk$ such that
$H(\nonce\|\pk\|t)<\target$ (where $t$ is some agreed-upon time).  To ensure
that all participants agree on the value of $\nonce$ it must be sufficiently
far back in the blockchain (in case of forks), which means the protocol does
not satisfy any strong notion of unpredictability.  Furthermore, $\nonce$ is
chosen by the participant who created this previous block, so as this value
could have been biased by them it is also not possible to prove fairness.

The other main proposed protocol that does not explicitly rely on randomness
generation is Algorand~\cite{algorand}, whose solution in many ways resembles
the RanDAO-inspired approach we describe in Section~\ref{sec:randao}.  In this
protocol, a value $Q_0$ is initialized randomly, and then for every round
$\round$, a new value is defined as
$Q_\round\gets H(SIG_{l^r}(Q_{\round-1})\|\round)$, where $SIG_{l^r}$ is a
deterministic signature under the leader of round $\round$.  Algorand does not
specify, however, how $Q_0$ could be initialized, saying just that it is ``a
random number, part of the system description, and thus publicly known.''
This leaves open the possibility that whoever has set it may have again run a
grinding attack to bias the protocol (even if doing so would be
very computationally expensive).  Again then, it is not clear how to prove
fairness under the assumption that participants wouldn't act to bias the
protocol in their own favor (which seems economically rational).
\saraha{add more about Ouroboros praos}
Another similar solution is presented in Ouroboros Praos~\cite{praos}. In this protocol,
a random nonce is also initialized and then updated using a hash function applied to a Verifiable Random Function (VRF) using values from previous blocks.
%The hash function
%is applied to the concatenation of VRF values that are inserted into each block, using values from previous blocks 
%\saraha{they don't really satisfy unbiasability}
%``allow the leaky beacon
%functionality to be adversarially manipulated by allowing a number of ?resets? to be performed
%by the adversary. ''
However their beacon can be manipulated by the adversary as a number of ``resets'' can be performed
and thus it does not satisfy our security guarantees neither.
To the best of our knowledge
then, \sysname is the first implemented solution to achieve all security
guarantees.

Finally, a related notion to the idea of leader election is that of generating
public shared randomness~\cite{EPRINT:BonClaGol15,randhound,scrape}.  While
most existing protocols do not scale to hundreds or thousands of participants,
recent work such as RandHound~\cite{randhound} (and its related RandShare and
RandHerd protocols) work to achieve this goal.  In
Appendix~\ref{appendix:scrape}, we present the SCRAPE protocol~\cite{scrape},
due to Cascudo and David, that we incorporate into \sysname as a way to
initialize a random beacon.

\section{Background Definitions and Notation}\label{sec:defns}

In this section, we present the underlying notation and cryptographic
primitives we rely on in the rest of the paper.  In particular, we begin with
notation (Section~\ref{sec:notation}), and then present definitions of hash
functions (Section~\ref{sec:defns-hash}), and coin
tossing (Section~\ref{sec:defns-cointoss}).  We end with an overview of
blockchains (Section~\ref{sec:defns-blockchain}), and in particular give
definitions of proof-of-stake and some specific notation
associated with the Ethereum platform.

\subsection{Preliminaries}\label{sec:notation}

If $x$ is a binary string then $|x|$ denotes its bit length.  If $S$ is a
finite set then $|S|$ denotes its size and $x\randpick S$ denotes sampling a
member uniformly from $S$ and assigning it to $x$.  $\secp\in\N$ denotes the
security parameter and $\usecp$ denotes its unary representation.

Algorithms are randomized unless explicitly noted otherwise.  ``PT'' stands
for ``polynomial time.'' By $y\gets A(x_1,\ldots,x_n;R)$ we denote running
algorithm $A$ on inputs $x_1,\ldots,x_n$ and random coins $R$ and assigning
its output to $y$.  By $y\randpick A(x_1,\ldots,x_n)$ we denote $y\gets
A(x_1,\dots,x_n;R)$ for $R$ sampled uniformly at random.  By
$[A(x_1,\ldots,x_n)]$ we denote the set of values that have non-zero 
probability of being output by $A$ on inputs $x_1,\ldots,x_n$.  Adversaries 
are algorithms.  We denote non-interactive algorithms using the font
$\mathsf{Alg}$, and denote interactive protocols using the font
$\mathtt{Prot}$.  We further denote such protocols as
$\outputs\randpick\mathtt{Prot}(\usecp,\participants,\inputs)$, where
the $i$-th entry of $\inputs$ (respectively, $\outputs$) is used to denote 
the input to (respectively, output of) the $i$-th 
participant.

We say that two probability ensembles $X$ and $Y$ are statistically close over
a domain $D$ if
$\frac{1}{2}\sum_{\alpha\in D}|\pr[X=\alpha]-\pr[Y=\alpha]|$ is negligible; we
denote this as $X\approx Y$.  %If $X$ is statistically close to the discrete 
%uniform distribution we denote this as $\pr[X=\alpha]\approx \frac{1}{n}$.

\subsection{Hash functions}\label{sec:defns-hash}

A hash function is a function $H:\{0,1\}^*\rightarrow\{0,1\}^\ell$; i.e., a
function that maps strings of arbitrary length to strings of some fixed length
$\ell$.  When a hash function is modeled as a random oracle, this means that
computing $H(x)$ is modeled as (1) looking up $x$ in some global map and
using the value of $H(x)$ if it has been set already, and (2) if not, picking a
random value $y\randpick\{0,1\}^\ell$ and setting $H(x)\gets y$ in the map.

There are many desired properties of hash functions that trivially hold when
they are modeled as random oracles.  Of these, the two we are particularly
interested in \emph{unpredictability} and \emph{secrecy}~\cite{C:Canetti97}.
These are defined as follows:

\begin{definition}{\emph{\tbf{\cite{C:Canetti97}}}}
\label{def:hash-unpredictability}
A hash function satisfies \emph{unpredictability} if no PT adversary can
find an $x$ such that $(x,H(x))$ has some desired property.  More formally,
let $I_x$ be an oracle that, on queries $z$, returns whether or not $z = x$.
Then for all $x$, predicates $P(\cdot)$, and PT adversaries $\A$ that have
access to $H(\cdot)$, there exists an algorithm $\A'$ that has access only to
$I_x$ such that $\A'$ has roughly the same advantage in predicting $P(x)$ as
does $\A$.
\end{definition}

\begin{definition}{\emph{\tbf{\cite{C:Canetti97}}}}
\label{def:hash-secrecy}
A hash function satisfies \emph{secrecy} if, given $H(x)$, no PT adversary
with binary output can infer any information about $x$.  Formally, for a
domain $D$ and for all $x,y\in D$,
\[
\langle x,\A(H(x))\rangle \approx \langle x,\A(H(y))\rangle.
\]

\end{definition}

\subsection{Coin tossing and random beacons}\label{sec:defns-cointoss}

Coin tossing is closely related to leader
election~\cite{collectivecoinflipping}, and allows two or more parties to
agree on a single or many random bits~\cite{blum-cointossing,FOCS:BenLin85,EPRINT:Popov16};
i.e., to output a value $R$ that is statistically close to
random.

A coin-tossing protocol must satisfy \emph{liveness}, \emph{unpredictability},
and \emph{unbiasability}~\cite{randhound}, where we define these (in keeping with our
definitions for leader election in Section~\ref{sec:model}) as follows:

\begin{definition}\label{def:coin-liveness}
Let $\afrac$ be the fraction of participants controlled by an adversary $\A$.
Then a coin-tossing protocol satisfies \emph{$\afrac$-liveness} if it is still
possible to agree on a random value $R$ even in the face of such an $\A$.
\end{definition}

\begin{definition}\label{def:coin-unpredictability}
A coin-tossing protocol satisfies unpredictability if, prior to some step
$\barrier$ in the protocol, no PT adversary can produce better than a random
guess at the value of $R$.
\end{definition}

\begin{definition}\label{def:coin-unbiasability}
A coin-tossing protocol is \emph{$\afrac$-unbiasable} if for all PT
adversaries $\A$ controlling an $\afrac$ fraction of participants, the
output $R$ is still statistically close to a uniformly distributed random
string.
\end{definition}

Many coin-tossing protocols~\cite{blum-cointossing,EPRINT:Popov16} follow a
structure called
\emph{commit-then-reveal}: to start, in the commit phase each participant
creates and broadcasts a cryptographic commitment~\cite{Goldreich04} to a random
value.  In the reveal phase, participants broadcasts the opening
of their commitments, and can use the opening to check the validity of the
initial commitment.
The output value is then some combination (e.g., XOR) of all the individual
random values.  Intuitively, while this basic solution does not satisfy
liveness, it satisfies unpredictability due to the hiding
property of the commitment scheme (where $\barrier$ corresponds to the step in
which the last commitment is opened), and unbiasability due to the binding
property.
In order to achieve liveness, \emph{secret sharing} can be used.  To ensure
that the protocol produces a valid output, participants create shares of
their secret and send these shares to other participants during the commit
phase. This allows participants to recover the value of another participant
even if they abort.

A closely related concept to coin tossing is \emph{random beacons}.  These
were first introduced by Rabin~\cite{rab81} as a service for ``emitting at
regularly spaced time intervals, randomly chosen integers''.  To extend the
above definitions to random beacons, as inspired by~\cite{EPRINT:BonClaGol15}, we require that
the properties of $\afrac$-liveness and $\afrac$-unbiasability apply
for each iteration of the beacon, or \emph{round}.  We also require that the
$\barrier$ in the unpredictability definition is at least the
begining of each round.  We present such a scheme, SCRAPE~\cite{scrape}, in
more detail in Appendix~\ref{appendix:scrape}.

%\saraha{add here definition about public randomness/globally sourced randomness,
%how do they define it in the randhound paper?}
\subsection{Blockchain basics}\label{sec:defns-blockchain}

Distributed ledgers, or blockchains, have become increasingly popular ever
since Bitcoin was first proposed by Satoshi Nakamoto in
2008~\cite{nakamoto-bitcoin}.
Briefly, individual Bitcoin users wishing to pay other users broadcast
\emph{transactions} to a global peer-to-peer network, and the peers
responsible for participating in Bitcoin's consensus protocol (i.e., for
deciding on a canonical ordering of transactions) are known as \emph{miners}.
In order to participate, miners form \emph{blocks}, which contain (among
other things that we ignore for ease of exposition) the transactions collected
since the
previous block was mined, which we denote $\txset$, a pointer to the previous
block hash $h_{\mathsf{prev}}$, and a \emph{proof-of-work} (PoW).  This PoW
is the solution to a computational puzzle specified by a value $\target$;
roughly, it is a value $\nonce$ such that $H(\nonce\|\txset\|h_{\mathsf{prev}})
< \target$, and indicates that the miner has put in the computational effort
(or work) necessary to find $\nonce$.  Every other peer in the network can
verify the validity of a block, and if future miners choose to include it by
incorporating its hash into their own PoW then it becomes part of the global
\emph{blockchain}.
In the rare case where two miners find a block at the same time,
other miners decide which block to mine off of.  Bitcoin follows the
\emph{longest chain} rule, meaning that whichever fork creates the longest
chain is the one that is considered valid.  The block that hasn't been
included in that fork is thus abandoned.

\subsubsection{Proof-of-stake}\label{sec:defns-pos}

By its nature, PoW consumes a lot of energy.  Thus, some alternative consensus
protocols have been proposed that are more cost-effective; as of this writing,
arguably the most popular of these is called \emph{proof-of-stake}
(PoS)~\cite{pos-forum,ppcoin,posaltcoins,casper}. If
we consider PoW to be a leader election protocol in which the leader (i.e.,
the miner with the valid block) is selected in proportion to their amount of
computational power, then PoS can be seen as a leader election protocol in
which the leader (i.e., the participant who is \emph{eligible} to propose a
new block) is selected in proportion to some ``stake'' they have in the
system.  This can be represented as the amount of coins they have (either in
some form of escrow or just in total), or the age of their coins.

As security no longer stems from the fact that it is expensive to create a
block, PoS poses several technical challenges~\cite{interactivepos}.  The
main three are as follows: first, the \emph{nothing at stake} problem says
that have no reason to not mine on top of every chain, since mining is
costless, so it is more difficult to reach consensus.  This is an issue of
incentives, so given our focus on leader election we consider it out of scope
for this paper.

Second, PoS allows for \emph{grinding attacks}, in which once a miner is
elected leader they privately iterate through many valid blocks (again,
because mining is costless) in an attempt to find one that may give them an
unfair advantage in the future (e.g., make them more likely to be elected
leader).  This is an issue that we encounter in Section~\ref{sec:construction}
and address using a public source of randomness.

Finally, in a \emph{long range attack}, an attacker may bribe miners into
selling their old private keys, which would allow them to re-write the entire
history of the blockchain.  Again, this is an issue of incentives that we
consider out of the scope of this paper.

\subsubsection{Smart contracts and Ethereum}

Bitcoin's scripting language is intentionally limited, and intended to support
mainly the transfer of coins from one set of parties to another.  In contrast,
the Ethereum platform provides a more complex scripting language that allows
it to act as a sort of distributed virtual machine, and to support more
complex objects called \emph{smart contracts}.

Ethereum transactions contain a destination address $\dest$ (that can specify
either the location of a stateful smart contract, or just a regular user-owned
address), an amount $\amount$ to be sent (denominated in ether), an optional
data field $\data$, a \emph{gas limit}, and a signature $\sigma$ authorizing
the transaction with respect to the sender's keypair $(\pk,\sk)$ (just as in
Bitcoin).  We ignore everything except the $\data$ field in our protocol
specifications below and in Section~\ref{sec:construction}, but mention
briefly here that gas is a subcurrency in Ethereum used to pay for the
computational operations performed by a smart contract.
We denote the process of creating a transaction as (again, ignoring all but
the $\data$ field) $\tx\randpick\formtx(\sk,\data)$, and the process of
verifying a transaction as $0/1\gets\verifytx(\tx)$.

\section{A Model for Leader Election}\label{sec:model}

Most of the consensus protocols in the distributed systems literature
are leader-based, as it is the most optimal solution in term of
coordination~\cite{leaderless-consensus}.  Perhaps as a result, leader
election has in general been very well studied within the distributed systems
community~\cite{saks-leaderelection,russel-leaderelection,feige-leaderelection,orv-leaderelection,king-leaderelection}.

Nevertheless, to the best of our knowledge the problem of leader election has
not been given an extensive cryptographic or security-focused treatment, so in
this section we provide a threat model in which we consider a variety of
adversarial behavior.  This is particularly crucial in open
applications like consensus protocols for blockchains, as any actor can
participate in the
consensus protocol, so we cannot place any trust in the set of potential
leaders or assume any built-in Sybil resistance.  While this is our motivating
application, we consider this type of leader election to be useful in any
application in which some limited set of untrusted participants should be
eligible to take some action.

\subsection{The setting}

We consider a round-based leader election protocol run between a set
of participants $\participants$; that is, a protocol in which the participants
want to agree on who is eligible to take some action (e.g., propose a new
block) in some future round $\round$.

Each participant maintains some private state
$\privatestate$, and all participants are expected to agree on some
public state $\pubstate$.  For ease of exposition, we assume each
$\privatestate$ includes the public state $\pubstate$.  We refer to a
message sent by a participant as a \emph{transaction}, denoted $\tx$, where
this transaction can either be broadcast to other participants (as in a more
classical consensus protocol) or committed to a public blockchain.

Our model is inspired by the classical approach of coin-tossing protocols
that proceed in a commit-then-reveal fashion.  It consists of four algorithms
and one interactive protocol, which behave as follows:

\begin{description}%[itemsep=1pt,leftmargin=0.2cm]

\item[$\pubstate\randpick\setup(\usecp)$] is used to establish the
initial public state $\pubstate$.

\item[$(\privatestate,\txdep)\randpick\commit(\pubstate,
\round_\maxr)$] is used by a participant to commit themselves to participating
in the leader election up to some round $\round_\maxr$.  This involves
establishing both an initial private state $\privatestate$ and a public
announcement $\txdep$.

\item[$\{\privatestate^{(i)}\}_i\randpick\coord(\usecp,\participants,
\{(\round,\privatestate^{(i)})\}_i)$] is run amongst the committed
participants, each of whom is given $\round$ and their own
private state $\privatestate^{(i)}$, in order to update both the
public state $\pubstate$ and their own private states to prepare the leader
election for round $\round$.

\item[$\txrev\randpick\reveal(\round,\privatestate)$] is used by
a participant to broadcast a proof of their eligibility $\txrev$ for round
$\round$ (or $\bot$ if they are not eligible).

\item[$0/1\gets\verify(\round,\pubstate,\txrev)$] is used by a participant
to verify a claim $\txrev$ of eligibility for round $\round$.

\end{description}

\subsection{Security properties}

We would like a leader election protocol to achieve three security properties:
\emph{liveness}, \emph{unpredictability}, and \emph{fairness}.  The first
property map to the established property of liveness for
classical consensus protocols, although as we see below we consider several
different flavors of unpredictability that are specific to our motivating
blockchain-based application.  The final one, fairness (also called
\emph{chain quality}~\cite{iohk-pos}), is especially important in open
protocols like blockchains, in which participation must be explicitly
incentivized rather than assumed.

We begin by defining liveness, which requires that consensus can be achieved
even if some fraction of participants are malicious or inactive.

\begin{definition}[Liveness]\label{def:liveness}
Let $\afrac$ be the fraction of participants controlled by an adversary $\A$.
Then a leader election protocol satisfies \emph{$\afrac$-liveness} if it is
still possible to elect a leader even in the face of such an $\A$; i.e., if
for every public state $\pubstate$ that has been produced via
$\coord$ with the possible participation of $\A$, it is still possible for at
least one participant, in a round $\round$, to output a value $\txrev$ such that
$\verify(\round,\pubstate,\txrev)=1$.
\end{definition}

Next, unpredictability requires that participants cannot predict which other
participants will be elected leader before some point in time.

\begin{definition}[Unpredictability]
A leader election protocol satisfies \emph{unpredictability} if, prior to some
step $\barrier$ in the protocol, no PT adversary $\A$ can produce better than
a random guess at
whether or not a given participant will be eligible for round $\round$.  If
$\barrier$ is the step in which a participant broadcasts $\txrev$, and we
require $\A$ to guess only about the eligibility of honest
participants (rather than participants they control), then we say it
satisfies \emph{delayed} unpredictability.  If it is still difficult for $\A$
to guess even about their own eligibility, we say it satisfies
\emph{private} unpredictability.
\end{definition}

Most consensus protocols satisfy only the regular variant of unpredictability
we define, where $\barrier$ is the point at which the $\coord$ interaction is
``ready'' for round $\round$ (e.g., the participants have completed a
coin-tossing or other randomness-generating protocol).  This typically occurs
at the start of the round, but may also occur several rounds before it.

If an adversary is aware of the eligibility of other participants ahead of
time, then it may be able to target these specific participants for a
denial-of-service (DoS) attack, which makes achieving liveness more difficult.
%This is an issue that has been mentioned by other proposals for proof-of-stake
%protocols~\cite{iohk-pos}, but not solved.\saraha{algorand does solve it}
A protocol that satisfies delayed
unpredictability solves this issue, however, as participants reveal their
own eligibility only when they choose to do so, by which point it may be too
late for the adversary to do anything. (For example, in a proof-of-stake
protocol, if participants include proofs of eligibility only in the blocks
they propose, then by the time the leader is known the adversary has nothing
to gain by targeting them for a DoS attack.)

A protocol that satisfies private unpredictability, in contrast, is able to
prevent an adversary from inflating their own role as a leader.  For
example, if an adversary can predict many rounds into the future what their
own eligibility will be, they may attempt to bias the protocol in their favor
by \emph{grinding}~\cite{interactivepos} through some problem space in order to produce an
initial commitment $\txdep$ that yields good future results.

Private unpredictability thus helps to guarantee fairness, which we define as
requiring that each committed participant is selected as leader equally often.
While for the sake of simplicity our definition considers equal weighting of
participants, it can easily be extended to consider participants with respect
to some other distribution (e.g., in a proof-of-stake application, participants
may be selected as leader in proportion to their represented ``stake'' in the
system).

\begin{definition}[Fairness]
A leader election protocol is \emph{$\afrac$-fair} if for all PT adversaries
$\A$ controlling an $\afrac$ fraction of participants,
the probability that $\A$ is selected as leader is nearly uniform;
i.e., for all $\round$, $\privatestate$, $\pubstate$ (where again
$\pubstate$ has been produced by $\coord$ with the possible participation of
$\A$), and $\txrev$ created by $\A$,
$
\pr[\verify(\round,\pubstate,\txrev)=1]\approx \afrac.
$
\end{definition}

\section{\sysname: A Leader Election Protocol}\label{sec:construction}

In this section, we present \sysname, a leader election protocol with
minimal coordination that satisfies fairness, liveness, and strong notions of
unpredictability.  In Section~\ref{sec:impl}, we present and evaluate an
implementation of \sysname using the Ethereum blockchain.

To begin, we present a straw-man solution in Section~\ref{sec:straw-man}.  The
benefits of this solution are that it in fact requires no coordination
whatsoever; participants are elected leader based on privately held (but
publicly committed to) beliefs about their own eligibility, which they then
reveal to other participants when relevant.  While this means the protocol
achieves liveness and the strong notion of delayed unpredictability, it cannot
achieve fairness due to participants' reliance on their own personal
randomness, which they can bias without significant effort.

To address this shortcoming, in Section~\ref{sec:randomness} we discuss how to
fold a source of shared randomness into our protocol in order to make it fair.
%We discuss existing solutions for producing shared randomness in Section~\ref{sec:beacon}.
% We then explore existing solutions for producing shared randomness and find
% that, for our particular setting, the SCRAPE protocol of Cascudo and
% David~\cite{scrape} is most suitable due to its simplicity and low
% communication complexity. \sarahm{Right now it really doesn't look this way.}

Unfortunately, even the best of these solutions still requires significant
coordination, which is prohibitively expensive in a setting in which we want
leader election to occur frequently (e.g., every time a new block is
produced).  Thus, in Section~\ref{sec:randao} we explore how to bootstrap the
guarantees of these protocols to extend the randomness in a cheaper
way.  Here our techniques are inspired by the RanDAO protocol described by
Buterin~\cite{vitalik-randomness}.

Finally, we put everything together and present in
Section~\ref{sec:sysname} our full solution, \sysname, which combines
the private randomness of our straw-man solution, in order to achieve
delayed unpredictability, with a cheap source of public randomness, in
order to achieve fairness and private unpredictability.

\subsection{A straw-man solution}\label{sec:straw-man}

We begin with a simple straw-man solution in which participants commit to some
privately held randomness by forming a hash chain; i.e., a list
$(h_1,\ldots,h_n)$ such that $h_1 = H(\seed)$ for some random seed and $h_i =
H(h_{i-1})$ for all $i$, $2\leq i\leq n$.  We denote by $H^i(\seed)$ the
application of $H$ to $\seed$ for $i$ times.

In $\setup$, the public state is just initialized to be an empty list;
eventually, it will contain a list of commitments $\commitlist$ posted by
participants, in the form of the heads of their hash chains.  For this and all
of our subsequent solutions, we assume participants have generated signing
keypairs $(\pk,\sk)$ and are aware of the public key associated with each
other participant (which can easily be achieved at the time participants
run $\deposit$).  We omit the process of generating keys from our formal
descriptions but denote by $\tx[\id]$ the participant associated with a
given transaction (which corresponds to the public key used to sign it).

To to be considered as potential leaders, participants must place a
\emph{security deposit}, that involves creating a commitment to their
randomness.  In our implementation in Section~\ref{sec:impl}, this entails
including the commitment in a transaction that also puts into escrow some
monetary value and sending it to an Ethereum smart contract, but in general
the commitment could just be a message broadcast to all other participants.
This means the $\deposit$ function picks a random seed, forms a hash chain
of length $\round_\maxr$, and returns the seed itself as the private state
of the participant and the head of the hash chain (incorporated into a
transaction) as the transaction to add to the public state.  More formally:

\begin{center}
\begin{tabular}{l}
\underline{$\deposit(\pubstate,\round_\maxr)$}\\
$\seed\randpick\{0,1\}^*$\\
$\txdep\randpick\formtx(\sk,H^{\round_\maxr}(\seed))$\\
return ($\seed,\txdep$)
\end{tabular}
\end{center}

The eligibility of each participant to act as leader is then based purely on
the privately held beliefs contained inside their hash chain.  In particular,
this solution requires no coordination and thus no $\coord$ protocol.
Instead, in each round $\round$ a participant can check if they are
eligible by checking if $H^{\round_\maxr-\round}(\seed)<\target_\round$ for
$\target_\round = \hashmax / n_\round$ (i.e., the maximum hash value divided
by the number of potential leaders in $\round$).  They can then prove their
eligibility by peeling back to that layer of their hash chain, which means
running $\reveal$ as follows:

\begin{center}
\begin{tabular}{l}
\underline{$\reveal(\round,\privatestate)$}\\
$h_\round\gets H^{\round_\maxr-\round}(\seed)$\\
return $\txrev\gets\formtx(\sk,h_\round)$\\
\end{tabular}
\end{center}

As with $\deposit$, in a more general system this could involve simply
broadcasting the value $h_\round$ itself, rather than forming it into a
transaction.

Finally, once a $\reveal$ transaction has been broadcast, other participants
can verify the eligibility of that participant by checking that the layer
revealed correlates with the commitment to their hash chain.  More formally:

\begin{center}
\begin{tabular}{l}
\underline{$\verify(\round,\commitlist=\pubstate,\txrev)$}\\
$h_\round\gets\txrev[\data]$\\
$\id\gets\txrev[\id]$\\
return $(h_\round < \target_\round)\land (H^\round(h_\round) = \commitlist[\id])$
\end{tabular}
\end{center}

As stated, participants can join by broadcasting a $\commit$ transaction at
any time after the protocol has started.  To maintain unpredictability,
however, participants should become eligible only after some fixed number of
rounds have passed since they ran $\commit$.  In this case, $\verify$ would
check not only that $h_\round < \target_\round$, but also that
$\round_\mathsf{joined} > \round - x$ (where $\round_\mathsf{joined}$ is the
round in which the participant broadcast $\txdep$ and $x$ is the required
number of interim rounds).

Because of its lack of coordination, and as long as the commitment hides the
previous values in the hash chain, this protocol clearly satisfies delayed
unpredictability, as participants never talk to each other before they reveal
their own eligibility.  It also achieves a notion of liveness, as again
participants do not rely on each other to establish their own eligibility.  It
could be the case in some round, however, that no participant is elected
leader.  In this case, to maintain liveness we consider that if after a
small delay no participants have broadcast a $\txrev$ transaction, we skip
the round and participants update the value of $\round$ as
$\round\gets\round+1$. This ensure that the protocol keeps
running and that a participant will be elected in some future round.

\begin{lemma}\label{lem:strawman-unpredictable}
If $H$ is a random oracle, then the protocol described above satisfies
liveness and delayed unpredictability.
\end{lemma}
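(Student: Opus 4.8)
The plan is to prove the two claimed properties separately, both times exploiting the same structural fact: since this straw-man uses no $\coord$ protocol, the eligibility of an honest participant $j$ for a round $\round$ is nothing but the deterministic predicate $[H^{\round_\maxr-\round}(\seed_j)<\target_\round]$ applied to $j$'s own, privately and freshly sampled seed $\seed_j$; it cannot be influenced by $\A$'s transactions nor by the rest of $\pubstate$, and when $H$ is a random oracle the chain values $H^i(\seed_j)$ behave like fresh uniform elements of $\{0,1\}^\ell$ up to a collision event of probability $O(q^2/\hashmax)$ over the $q=\mathrm{poly}(\secp)$ oracle queries made during the experiment.

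\emph{Liveness.} Here $\pubstate$ is just the commitment list $\commitlist$, and $\A$'s only influence on it is to insert its own entries; moreover $\verify(\round,\commitlist,\txrev)$ applied to a reveal by $j$ inspects only $\commitlist[j]$ and $j$'s revealed value, so it is independent of everything $\A$ does. Assuming $\afrac<1$, fix an honest committed $j$: for every round $\round$ we have $\pr[H^{\round_\maxr-\round}(\seed_j)<\target_\round]=\target_\round/\hashmax=1/n_\round$, and these events are essentially independent across rounds (distinct chain positions are fresh oracle outputs), so in every reachable $\pubstate$ it is possible for $j$ to output a $\txrev$ with $\verify(\round,\pubstate,\txrev)=1$, which is exactly the condition of Definition~\ref{def:liveness}. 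The round-skipping rule ($\round\gets\round+1$ after a short delay with no valid reveal), together with participants re-running $\commit$ before their chains are exhausted, ensures the protocol never halts; and the probability that \emph{no} honest participant is elected over $k$ consecutive rounds is at most $\prod_\round(1-1/n_\round)\le e^{-\Omega(k(1-\afrac))}$, negligible once $k=\omega(\log\secp)$. Hence the protocol is $\afrac$-live for all $\afrac<1$.

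\emph{Delayed unpredictability.} Fix an honest participant $j$ and a round $\round$, and take $\barrier$ to be the step at which $j$ broadcasts $\txrev^{(j)}$ for round $\round$. Before $\barrier$, everything $\A$ has seen that depends on $j$'s chain is the commitment $h_{\round_\maxr}^{(j)}=H^{\round_\maxr}(\seed_j)$ and the reveals $h_{\round_\maxr-\round'}^{(j)}=H^{\round_\maxr-\round'}(\seed_j)$ for those rounds $\round'<\round$ in which $j$ happened to be eligible. Every such value sits at a chain index $\ge\round_\maxr-\round+1$, \emph{strictly above} the value $x:=h^{(j)}_{\round_\maxr-\round}$ that decides eligibility for round $\round$: from any known value $\A$ can iterate $H$ to recover all higher-index values, but reaching $x$ requires inverting $H$ at least once. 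Equivalently, $\A$'s entire view restricted to $j$ is a function of $H(x)$ alone, so by the secrecy property of $H$ (Definition~\ref{def:hash-secrecy}) it leaks no information about $x$, in particular not the bit $[x<\target_\round]$. Concretely in the random-oracle model: $x$ is uniform in $\{0,1\}^\ell$ (a fresh, collision-free point), and letting $\mathsf{Bad}$ be the event that $\A$ ever obtains $x$ (the only route is a successful preimage search stepping down $j$'s chain from $\A$'s current knowledge, involving at most $\round$ images), a union bound over $\A$'s $q$ queries gives $\pr[\mathsf{Bad}]=O(q\cdot\round/\hashmax)+O(q^2/\hashmax)=\mathsf{negl}(\secp)$, using that $j$'s own chain-building queries are hidden from $\A$ and that the relevant chain values are themselves uniform. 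Conditioned on $\neg\mathsf{Bad}$, $x$ stays uniform over a set of size at least $\hashmax-q$ given $\A$'s whole view, so $\pr[\,H^{\round_\maxr-\round}(\seed_j)<\target_\round\mid\text{view}\,]$ is within negligible distance of the prior $1/n_\round$; adversarially controlled participants contribute only their own independent chains (colliding with $\seed_j$'s chain with probability at most $O(q^2/\hashmax)$) and so add nothing. Hence no PT $\A$ guesses $j$'s round-$\round$ eligibility before $\barrier$ with advantage non-negligibly above a random guess.

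The main obstacle is making the random-oracle bookkeeping in the unpredictability part fully rigorous: one must argue, via lazy sampling / deferred programming of $H$, that the point $x=h^{(j)}_{\round_\maxr-\round}$ remains uniformly distributed conditioned on $\A$'s view until $j$ reveals it or $\mathsf{Bad}$ occurs, while simultaneously ruling out intra-chain and inter-chain collisions and handling the corner cases (participants joining late, so that their first $x$ rounds do not yield valid reveals; and the case where $j$ has not revealed anything before round $\round$, where $\A$ has only the commitment $H^{\round_\maxr}(\seed_j)=H^{\round}(x)$ and the preimage problem is strictly harder). These steps are standard but need care to land a clean $\mathsf{negl}(\secp)$ bound.
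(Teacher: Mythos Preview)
Your proposal is correct and follows essentially the same approach as the paper: liveness via the absence of any $\coord$ step (so $\A$ cannot block honest reveals) together with the round-skipping rule, and delayed unpredictability via the observation that before $\txrev$ is broadcast $\A$ has at best $H(h_\round)$ and cannot infer the predicate $[h_\round<\target_\round]$ from it in the random-oracle model. The only minor differences are that you invoke secrecy (Definition~\ref{def:hash-secrecy}) and a direct bad-event argument where the paper simply cites unpredictability (Definition~\ref{def:hash-unpredictability}), and you supply quantitative bounds the paper omits; these are elaborations, not a different route.
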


\begin{proof}
The protocol satisfies liveness, as no coordination is required
amongst participants.  In particular then, any online participant can
communicate their own eligibility to other online participants and be
considered leader.  The exception is if no online participants satisfy the
condition that $h_\round < \target_\round$, but then we can either
skip this round as explained above, or modify
$\target_\round$ to ensure that this never happens.
% (by changing $n$ from the
% total number of potential leaders to the number we expect to be online at any
% given time), or the protocol can just skip to the next round.

To satisfy delayed unpredictability, we must show that\dash unless it has
formed $h_\round$ itself\dash no adversary $\A$ can predict whether or not
$h_\round < \target_\round$ before the point at which it is given $\txrev$,
and thus in particular before it is given $h_\round$
itself.  In the protocol, the adversary sees $H^{\round_\maxr}(\seed)$ as part
of the commitment of the relevant honest participant, and if that participant
has run $\reveal$ before it may have also seen $H^{\round_\maxr-\round'}(\seed)$
for $\round' < \round$.  In the worst case, it has seen this for $\round' =
\round - 1$, meaning it has seen $H^{\round_\maxr - \round + 1}(\seed) =
H(h_\round)$.

If we define $x = h_\round$, then this is equivalent to giving $H(x)$ to $\A$
and asking it to predict some predicate $P(x)$, where in this case
$P(z) = 1$ if $z < \target_\round$ and $0$ otherwise.  By the unpredictability
of random oracles (Definition~\ref{def:hash-unpredictability}), we know that
this is not possible.
\end{proof}

If we use $\target_\round$ as described above, then in theory our protocol
should also be able to establish fairness if $\seed$ is uniformly distributed.
Indeed, by setting $\target_\round$ on a per-participant basis, we could even
satisfy more complex fairness guarantees, such as ensuring that participants
who have put more money into escrow in their security deposit get chosen as
leader more often.

As $\seed$ is chosen by each participant rather than uniformly at random,
however, we cannot actually
guarantee fairness.  In particular, participants can \emph{grind} through
different seeds in order to find one that unfairly favors them.  This is
due partially to the fact that the protocol is not privately unpredictable
with respect to any well defined barrier: once the adversary has decided on a
seed, it can predict its own eligibility in all future rounds (modulo the
shifting number of participants, which it can either grind through as well, or
wait until the number of participants is established before grinding).

\begin{lemma}\label{lem:strawman-notfair}
If $H$ is a random oracle, then the protocol described above is not fair.
\end{lemma}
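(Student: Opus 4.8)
The plan is to exhibit an explicit adversarial strategy — a grinding attack — that lets an adversary controlling an $\afrac$ fraction of participants be elected leader with probability bounded away from $\afrac$, thereby violating the fairness definition. The key observation is that the straw-man protocol has no $\coord$ phase and no external randomness: a participant's eligibility in round $\round$ is determined entirely by $\seed$, which the participant chooses. So an adversary running $\deposit$ can, before committing, repeatedly sample fresh seeds $\seed_1,\seed_2,\ldots$ and for each one check in how many of the rounds $\round \in \{1,\ldots,\round_\maxr\}$ it would satisfy $H^{\round_\maxr-\round}(\seed_j) < \target_\round$, keeping whichever seed maximizes this count. Since $H$ is a random oracle, each layer $H^{\round_\maxr-\round}(\seed_j)$ is a fresh uniform $\ell$-bit value, so for a fixed seed the number of ``winning'' rounds is a sum of independent Bernoulli trials, and with even a modest number of trials $T$ the adversary finds a seed whose winning fraction exceeds the baseline $1/n_\round$.

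Concretely, I would argue as follows. First, fix a round $\round$ and note that for an honestly generated (uniform) seed, $\pr[H^{\round_\maxr-\round}(\seed) < \target_\round] = \target_\round/\hashmax = 1/n_\round$, which is the ``fair'' probability $\afrac$ when the adversary controls a $1/n_\round$-size slice. Second, observe that the adversary's view during $\deposit$ places no constraint on the seed it eventually commits to — it may evaluate $H$ on as many candidate hash chains as its polynomial running time allows. Third, by a standard tail bound (Chernoff, or just the fact that the maximum of $T$ i.i.d.\ binomials concentrates above the mean), for $T$ polynomial in $\secp$ the best of $T$ seeds gives the adversary eligibility in a strictly larger fraction of rounds than $1/n_\round$ — in fact the gap is non-negligible, e.g.\ $\Theta(\sqrt{(\log T)/\round_\maxr})$ per round, or much larger if the adversary targets a specific high-value round. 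Fourth, I would note that the shifting number of participants $n_\round$ does not rescue fairness: the adversary can either wait until the participant set (and hence the $\target_\round$ values) is fixed before grinding, or grind over the anticipated trajectory of $n_\round$ as well. Hence there exists a round $\round$, a public state $\pubstate$ produced with the adversary's participation, and a $\txrev$ produced by $\A$ with $\pr[\verify(\round,\pubstate,\txrev)=1]$ bounded away from $\afrac$ by a non-negligible amount, contradicting $\afrac$-fairness.

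The main obstacle — really a matter of care rather than difficulty — is making the ``grinding succeeds'' step quantitatively clean: one must be precise that the adversary's advantage is \emph{non-negligible} (so that $\not\approx$ in the fairness definition genuinely holds) while only assuming polynomial running time, and one must be careful about what ``$\A$ is selected as leader'' means when $\A$ controls multiple participants and $n_\round$ varies. I expect the cleanest route is to reduce to a single adversarial participant aiming at a single fixed target round $\round^\ast$: with $T$ trials the adversary wins that round with probability $1 - (1 - 1/n_{\round^\ast})^T$, which for $T = \omega(n_{\round^\ast})$ is $1 - o(1)$, vastly exceeding the fair value $1/n_{\round^\ast}$. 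This sidesteps concentration arguments entirely and makes the non-negligibility transparent. The only remaining subtlety is confirming that nothing in the protocol forces the adversary to ``use'' a seed it has inspected — and indeed $\deposit$ as written imposes no such constraint, and the later remark about requiring $\round_\mathsf{joined} > \round - x$ delays eligibility but does not prevent pre-commitment grinding — so the attack goes through.
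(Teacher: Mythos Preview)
Your proposal is correct and shares the paper's core idea: the adversary grinds through candidate seeds before committing, exploiting the fact that eligibility is fully determined by the privately chosen $\seed$. Both you and the paper model the hash chain layers as independent uniform $\ell$-bit values under the random oracle, and both conclude that a polynomial-time grinding search suffices to break fairness.

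The quantitative packaging differs. The paper analyzes the \emph{aggregate} advantage: it defines $M(\seed)$ as the number of rounds (out of $\round_\maxr$) in which the seed wins, observes $M\sim B(\round_\maxr,p)$ with $p=\target_\round/(2^\ell-1)$, and computes the expected number of seeds (hence hashes, $\round_\maxr/\alpha$) the adversary must try to find one with $M>\round_\maxr p+\epsilon$. Your ``cleanest route'' instead fixes a single target round $\round^\ast$ and observes that $T$ independent seeds yield a win there with probability $1-(1-1/n_{\round^\ast})^T$, which is $1-o(1)$ for $T=\omega(n_{\round^\ast})$. Your single-round argument is more elementary and makes the non-negligibility of the gap immediate, directly matching the per-round formulation of the fairness definition; the paper's version gives a fuller picture of the adversary's amortized gain across the whole hash chain and quantifies the grinding cost, but is slightly less crisp about tying back to the $\pr[\verify=1]\approx\afrac$ clause. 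Either argument establishes the lemma.
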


\begin{proof}
As $H$ is a random oracle, every element in the hash chain can be modeled as
a random variable uniformly distributed on $[0, 2^\ell-1]$.  For a given
seed $\seed$, the probability $p$ for a given $\round$ that $h_\round <
\target_\round$ is thus $\target_\round / (2^\ell-1)$.

%Moreover we have that $\target=\frac{2^{256}-1}{D}$,
%where $D$ is the difficulty.
%For a single seed $\seed$, let's write $h_i=H^i(\seed)$. We have
%$\mathbb{P}[h_i<\target]=p=\frac{1}{D}$.
Define $M(\seed)$ to be the number of hashes $h_\round$ in a hash chain of
length $\round_\maxr$ where $h_\round < \target_\round$; i.e.,
$M(\seed) = \sum_{h_\round<\target}1$.  $M$ is thus a sum of Bernoulli random
variables, meaning it follows a Binomial distribution $B(\round_\maxr,p)$.
It it possible for the adversary to \emph{grind} (i.e., iterate) through
different seeds in order to find one that performs better.

In particular, an adversary can break fairness as follows: it tries many
different seeds in order to find one for which
$M>\round_\maxr p+\epsilon$ for some $\epsilon>0$ (with $\trm{average}(M)=\round_\maxr p$).
By the properties of Binomial distribution, we have that

\begin{align*}
\pr[M>k] &= 1-\pr[M\le k]\\
&= 1-\sum_{i=0}^k\binom{\round_\maxr}{i}p^i(1-p)^{\round_\maxr-i}.
\end{align*}

If we define this last term to be $\alpha$, then the expectation on the
number of trials needed for the adversary to find such a seed is
$1/\alpha$.  In each trial, the adversary computes $\round_\maxr$
hashes, so in total it needs to compute $\round_\maxr / \alpha$ hashes.
\end{proof}

\subsection{Folding in randomness}\label{sec:randomness}

Given the limitation of our straw-man solution that adversaries can grind
through random seeds, and thus fairness does not hold, we must consider ways
to improve this guarantee.

We begin by considering other solutions that do not have randomness, or that
rely on some source of randomness produced without an explicit coin-tossing
protocol.  First, as suggested by Buterin~\cite{vitalik-randomness}, we
consider using $H(\id)\oplus h_\round$ rather than just using $h_\round$.  While
this improves matters, as now an adversary controlling several different
participants would have to grind through separate seeds for each of them (as
opposed to finding one good seed and reusing it across all of them), a
grinding attack is still possible so the protocol is not fair.

In any solution where the source of randomness is purely private, it
seems somewhat impossible to prevent attacks on fairness, as adversaries can
always privately work to bias their own randomness, whether in the form of
privately held beliefs about their own eligibility or public randomness that
they produce single-handedly.  Furthermore, if the source of randomness were
known too far in advance (such as if we tried to use something like $H(0)$,
which is uniformly random if we model $H(\cdot)$ as a random oracle), then the
value would be known too far ahead of time to achieve unpredictability, in
which\dash as we see in the proof of Theorem~\ref{thm:plus-randomness}\dash
participants must commit before knowing the randomness in a given round.

To obtain meaningful notions of fairness and unpredictability, it thus seems
necessary to incorporate a source of public, globally sourced randomness.  To
augment our straw-man solution using such randomness, we first consider an
interactive
protocol $\coord$ that acts to establish some shared randomness $R_\round$ for
a given round.  This randomness would then be folded into the straw-man
solution by requiring not that $h_\round < \target_\round$ but that
$H(h_\round\oplus R_\round) < \target_\round$.
%In this setting, after a participant create a $\commit$ transaction they have to wait
%at least one round
%before they can participate in the leader election. This
%is to ensure that when they commit, they do not know in advance what the value of the beacon will be
%in order to avoid grinding and ensure fairness.
This results in the following changes:

\begin{center}
\begin{tabular}{l}
\underline{$\prove(\round,\privatestate)$}\\
if $(H(h_\round\oplus R_\round) < \target_\round)$ return $\formtx(\sk,h_\round)$\\
~\\
\underline{$\verify(\round,\pubstate,\txrev)$}\\
return $(H(h_\round\oplus R_\round) < \target_\round)\land (H^\round(h_\round) =
\commitlist[\id])$
\end{tabular}
\end{center}

In order to maintain liveness, we consider that if no participant has
revealed $\txrev$ before some delay, we update the random beacon as
$\globalrandao_\round\gets H(\globalrandao_\round)$.  If $\coord$ is a secure
random beacon, then we have the following theorem:

\begin{theorem}\label{thm:plus-randomness}
If $H$ is a random oracle and $R_\round$ was produced by a random
beacon satisfying liveness (Definition~\ref{def:coin-liveness}),
unpredictability (Definition~\ref{def:coin-unpredictability}), and
unbiasability (Definition~\ref{def:coin-unbiasability}), then this protocol
satisfies liveness, fairness, delayed unpredictability (i.e., where
$\barrier$ is the step at which the elected leader reveals their proof), and
private unpredictability (where $\barrier$ is the step at which the
randomness $\globalrandao_\round$ is fixed).
\end{theorem}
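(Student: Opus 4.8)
The plan is to establish each of the four properties separately, leaning on the corresponding property of the random beacon $\coord$ and on the random-oracle model for $H$. First, for \textbf{liveness}: I would argue as in Lemma~\ref{lem:strawman-unpredictable}, noting that the beacon itself satisfies $\afrac$-liveness so the participants do agree on some $R_\round$; then, conditioned on $R_\round$, the eligibility test $H(h_\round\oplus R_\round)<\target_\round$ is satisfied by each participant independently with probability $\target_\round/(2^\ell-1)$, so by choosing $\target_\round$ appropriately (or via the fallback rule $\globalrandao_\round\gets H(\globalrandao_\round)$ that re-randomizes the test whenever nobody is eligible) at least one honest online participant becomes eligible in some round and can produce a valid $\txrev$. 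The fallback rule is the key point: each re-hash gives a fresh independent trial, so liveness holds with overwhelming probability.

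Next, for \textbf{delayed unpredictability}, I would essentially reuse the argument of Lemma~\ref{lem:strawman-unpredictable}: before the honest participant broadcasts $\txrev$, the adversary has at best seen $H(h_\round)$ (the next layer of the hash chain) and knows $R_\round$, so predicting whether $H(h_\round\oplus R_\round)<\target_\round$ reduces to predicting a predicate $P(x)$ on $x=h_\round\oplus R_\round$ given $H(x)$ — wait, more carefully, the adversary sees $H(h_\round)$ not $h_\round$, so $h_\round$ itself is hidden by secrecy of the random oracle (Definition~\ref{def:hash-secrecy}), hence $h_\round\oplus R_\round$ is unpredictable and so is $H(h_\round\oplus R_\round)$ by unpredictability (Definition~\ref{def:hash-unpredictability}). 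For \textbf{private unpredictability} with $\barrier$ the step at which $\globalrandao_\round$ is fixed, I would argue that before that step the beacon's own unpredictability (Definition~\ref{def:coin-unpredictability}) says $R_\round$ is indistinguishable from uniform to the adversary; since the participant must have committed $H^{\round_\maxr}(\seed)$ before knowing $R_\round$, for a random-oracle $H$ the value $H(h_\round\oplus R_\round)$ is uniform and independent of everything the adversary has seen, so it cannot do better than guessing whether it falls below $\target_\round$ — even for its own seed.

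The main obstacle is \textbf{fairness}. Here I would show the grinding attack of Lemma~\ref{lem:strawman-notfair} no longer works because the adversary must fix its commitment $\txdep=H^{\round_\maxr}(\seed)$ \emph{before} $R_\round$ is determined, and by unbiasability of the beacon (Definition~\ref{def:coin-unbiasability}) the adversary cannot skew $R_\round$ even controlling an $\afrac$ fraction of participants. So for each round the relevant quantity is $H(h_\round\oplus R_\round)$ where $h_\round$ is fixed in advance and $R_\round$ is (close to) uniform and independent; modeling $H$ as a random oracle, this is uniform on $[0,2^\ell-1]$, so $\pr[\verify(\round,\pubstate,\txrev)=1]=\pr[H(h_\round\oplus R_\round)<\target_\round]=\target_\round/(2^\ell-1)\approx\afrac$ when $\target_\round=\hashmax/n_\round$ and the adversary controls an $\afrac$-fraction of the $n_\round$ potential leaders. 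The subtlety to handle carefully is the adaptive order of commitments relative to the beacon: I must insist (as flagged in the discussion preceding the theorem) that a participant's $\txdep$ is locked in before $\globalrandao_\round$ is revealed for the rounds it will participate in — equivalently, that eligibility for round $\round$ uses a beacon value fixed strictly after the commitment round — and that the fallback re-hashing $\globalrandao_\round\gets H(\globalrandao_\round)$, while it re-randomizes, does not hand the adversary a grinding lever, since the adversary does not get to choose when to trigger it in a way that biases its own eligibility (any honest party's eligibility triggers the round to complete). Assembling these observations, fairness follows, and the theorem is proved.
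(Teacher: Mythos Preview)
Your proposal is correct and follows essentially the same decomposition and reasoning as the paper's own proof: liveness from beacon liveness plus the fallback re-hash, fairness from commitment-before-beacon combined with unbiasability so that $h_\round\oplus R_\round$ is uniform, delayed unpredictability from the random-oracle hiding of $h_\round$ given only $H(h_\round)$, and private unpredictability from beacon unpredictability plus independence of $h_\round$ and $R_\round$. If anything, you are more explicit than the paper about the subtleties (the adaptive commit/beacon ordering and whether the fallback rule creates a grinding lever), which the paper leaves largely implicit.
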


\begin{proof}
For liveness, a participant is elected if they broadcast a valid transaction
$\txrev$, such that $H(h_\round\oplus R_\round)<\target$. If $\coord$
satisifies liveness then an adversary controlling $\afrac$ participants
cannot prevent honest participants from agreeing on $R_\round$.
In the case where no participants produce a value $h_\round$ such that
$H(h_\round\oplus R_\round)<\target$, we update the value of $R_\round$ as
described above until one participant is elected.  With a similar argument as
in the proof of Lemma~\ref{lem:strawman-unpredictable}, the protocol thus
achieve liveness.

For fairness, a participant wins if $H(h_\round\oplus R_\round) <
\target_\round$.  By the assumption that $R_\round$ is unpredictable, and
thus unknown at the time an adversary commits to $h_\round$, the distribution
of the two values is independent.  Combining this with the assumption that
$R_\round$ is unbiasable and thus uniformly distributed, we can argue that
$h_\round\oplus R_\round$ is also uniformly distributed.  This implies that
the probability that the winning condition holds is also uniformly random, as
desired.

The argument for delayed unpredictability is almost identical to the one in
the proof of
Lemma~\ref{lem:strawman-unpredictable}: even when $R_\round$ is known, if $\A$
has not formed $h_\round$ itself then by the unpredictability of the random
oracle it cannot predict the value of $P(z)$, where $P(z) = 1$ if $H(z\oplus
R_\round) < \target_\round$ and $0$ otherwise.

Finally, private unpredictability follows from the unpredictability of
$R_\round$ and from the independence of $h_\round$ and $R_\round$.
\end{proof}

To be compatible with our implementation in Section~\ref{sec:impl}, the
concrete random beacon we use is the SCRAPE protocol, due to Cascudo and
David~\cite{scrape}.  For completeness, we sketch the original protocol and
present our modifications to it in Appendix~\ref{appendix:scrape}.  While we
chose SCRAPE due to its low computational complexity and compatibility with
public ledgers, several other protocols would be suitable as well, such as
RandShare or RandHound~\cite{randhound}, or the randomness generation process
used in Ouroboros~\cite{iohk-pos}.  Indeed, all of these solutions are similar
in that they instantiate a publicly verifiable secret sharing (PVSS) scheme
(also defined in Appendix~\ref{appendix:scrape}).

%In the next section we show how to leverage
%this randomness generation into a
%low cost random beacon.

\subsection{Extending an initial random beacon}\label{sec:randao}

While any random beacon protocol provides the necessary security guarantees
for Theorem~\ref{thm:plus-randomness}, all existing protocols are relatively
expensive in terms of both local computations and, more crucially,
coordination costs.  This is not desirable in a setting such
as proof-of-stake, in which leader election must be run frequently and
continuously.

To minimize these costs, we therefore consider a significantly cheaper
solution in which the random value is initialized using a secure coin-tossing
protocol, but is then extended and updated using an approach inspired by
RanDAO~\cite{vitalik-randomness}.

In this protocol, participants place security deposits just as they did in our
straw-man solution, meaning they run the exact same $\deposit$ algorithm, and
each commitment is added to the public list $\commitlist$.

To initialize the random value, participants now run a coin-tossing protocol
to generate a random value $\globalrandao_1$ (in our implementation in
Section~\ref{sec:impl} we use SCRAPE, but again any secure coin-tossing
protocol could work).

For each subsequent round, participants then verify whether or not they
are eligible to fold their randomness into the global value by
checking if $H(\randao_\round \oplus \globalrandao_{\round})
< \target$, where $\randao_\round=H^{\round_\maxr-\round}(\seed)$.  If this
value holds, then they reveal their additional randomness by running the same
$\reveal$ algorithm as in our straw-man solution, which others can verify in
the same way as well.  If the participant is deemed to in fact be eligible,
then the global randomness is updated as $\globalrandao_{\round+1}\gets
\globalrandao_{\round} \oplus \randao_\round$.
In this setting, however, an adversary that controls $\afrac$ participants
could get lucky, privately predict that they will be elected leader for a
few rounds, and choose their new seed value accordingly.
To thwart this, we can again make sure participants have to wait some number
of rounds after they committed.  This raises the probability that an honest
participants will be elected leader between the time an adversary commits
and the time they are allowed to participate, thus making the random beacon
truly unpredictable and grinding attacks ineffective.

In a general system, each participant would need to maintain their own local
copy of $\globalrandao$ and update it appropriately.  In a system such as the
Ethereum blockchain (where RanDAO was originally proposed for use), a smart
contract can act to verify the claims of participants and fold the randomness
in by itself.  This makes the contract the authoritative source of randomness,
and because it is on the blockchain it is also globally visible.

\begin{theorem}\label{thm:randao}
If $H$ is a random oracle and $\globalrandao$ is initialized using a secure
coin-tossing protocol, then for every subsequent round this random beacon is
also secure; i.e., it satisfies liveness (Definition~\ref{def:coin-liveness}),
unbiasability (Definition~\ref{def:coin-unbiasability}), and unpredictability
(Definition~\ref{def:coin-unpredictability}) for every subsequent round.
\end{theorem}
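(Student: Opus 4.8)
The plan is to prove the three properties by induction on the round $\round$, using the assumption that $\globalrandao_1$ is produced by a secure coin-tossing protocol as the base case and showing that each update step $\globalrandao_{\round+1} \gets \globalrandao_\round \oplus \randao_\round$ preserves all three properties. The key observation throughout is that $\randao_\round = H^{\round_\maxr - \round}(\seed)$ is, by the random oracle assumption, a uniformly distributed value on $[0,2^\ell-1]$, and that because a participant must have committed (and waited the required number of interim rounds) before $\globalrandao_\round$ was fixed, the value $\randao_\round$ that gets folded in is independent of $\globalrandao_\round$ in the view of any party who did not generate $\seed$ themselves.

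First I would handle \emph{liveness}: given that $\globalrandao_1$ is agreed upon by the liveness of the coin-tossing protocol, and that the fallback rule $\globalrandao_\round \gets H(\globalrandao_\round)$ guarantees that some participant eventually satisfies the eligibility condition $H(\randao_\round \oplus \globalrandao_\round) < \target$ (exactly as in the proof of Lemma~\ref{lem:strawman-unpredictable} and Theorem~\ref{thm:plus-randomness}), the update always fires, so $\globalrandao_{\round+1}$ is well-defined and agreed upon. Next, \emph{unbiasability}: assuming inductively that $\globalrandao_\round$ is statistically close to uniform, I would argue that the XOR with $\randao_\round$ keeps it uniform --- the standard argument that XOR with \emph{any} value, adversarially chosen or not, preserves uniformity provided the uniform operand is independent of it. The subtlety is that the adversary \emph{selects which} $\randao_\round$ to fold in (among those it controls that happen to be eligible), but since honest participants' contributions are also folded in over the rounds, and since the adversary commits to its hash chains before $\globalrandao_\round$ is revealed (by the mandatory waiting period), this selection cannot bias the result; this is where the wait-some-rounds mitigation described just before the theorem statement is essential. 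Finally, \emph{unpredictability}: prior to the barrier (the start of round $\round+1$, i.e., before $\randao_\round$ is revealed), predicting $\globalrandao_{\round+1} = \globalrandao_\round \oplus \randao_\round$ requires predicting $\randao_\round = H^{\round_\maxr-\round}(\seed)$, which by the secrecy and unpredictability of the random oracle (Definitions~\ref{def:hash-secrecy} and~\ref{def:hash-unpredictability}) is infeasible for any party other than the one holding $\seed$ --- and that honest party reveals it only at the barrier.

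The main obstacle I expect is making the unbiasability argument fully rigorous in the face of \emph{adversarial abort / selective participation}: the adversary observes $\globalrandao_\round$, sees which of its committed hash chains yield eligible $\randao_\round$ values, and may choose to reveal some and withhold others (or rush its reveal ahead of honest participants), thereby exercising a limited ``last-mover'' influence on $\globalrandao_{\round+1}$. The argument has to pin down that any such influence is confined to choosing among values the adversary was already committed to before $\globalrandao_\round$ was known, so that conditioned on at least one honest eligible contribution being folded in, the output remains uniform --- and that the waiting-period rule makes an honest eligible contribution overwhelmingly likely in the window between the adversary's commit and its permitted participation. This is precisely the ``at least one honest participant has contributed'' trust assumption flagged in the introduction, and the proof should state it explicitly as the hypothesis under which unbiasability (and hence fairness, via Theorem~\ref{thm:plus-randomness}) goes through for the extended beacon.
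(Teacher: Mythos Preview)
Your proposal is correct and follows essentially the same approach as the paper: induction on rounds with the secure coin-tossing initialization as the base case, liveness via the no-coordination-plus-fallback argument of Lemma~\ref{lem:strawman-unpredictable}, unbiasability via the XOR-preserves-uniformity argument (using the fact that $h_{\round-1}$ is committed before $\globalrandao_{\round-1}$ is known to obtain independence), and unpredictability via the random-oracle hiding of the honest leader's $h_{\round-1}$. Your explicit identification of the adversarial-abort / selective-reveal obstacle and the ``at least one honest contribution'' hypothesis is actually more careful than the paper's own proof, which treats $h_{\round-1}$ simply as a constant and defers that caveat to the informal discussion immediately after the proof.
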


\begin{proof}
For liveness, the argument is the same as in the proof of
Lemma~\ref{lem:strawman-unpredictable}: after initialization, no coordination
is required, so any online participant can communicate their own eligibility
to other online participants, allowing them to compute the new random value.
In the case where no participant is elected leader after some delay, then we
consider that participants update their random value by
$R_\round\gets H(R_\round)$ until a leader is elected.
% \saraha{then it does not satisfy unpredictability so maybe better to just skip a round
% after that}
% In the case where no participant is elected leader after some delay, then, as previsouly, we consider that participants skip a round and the new leader is then
% elected as $H(\globalrandao_\round\oplus h_{\round+1})<\target$.

For fairness, we proceed inductively.  By assumption, $\globalrandao$ is
initialized in a fair way, which establishes the base case.  Now, we assume
that $\globalrandao_{\round-1}$ is uniformly distributed, and
would like to show that $\globalrandao_\round$ will be as well.  By the same
argument as in the proof of Theorem~\ref{thm:plus-randomness}, by the
unpredictability of $R_\round$ and the fact that an adversary commits to
$h_{\round-1}$ before $\round-1$, the value of $h_{\round-1}$ is independent of the
value of $R_{\round-1}$ and can be considered as a constant.
%\sarahm{I'm not sure the inductive part of this
%argument works, since we know only that $R$ is unpredictable at round $1$,
%but an adversary will commit after.  So is it really okay to assume
%unpredictability of $R_{\round-1}$?  Also all the indices here are off since
%$R_\round$ doesn't incorporate $h_{\round-1}$ but $h_\round$.}
%\saraha{I think it works as long as $h_{\round-1}$ was committed before}

If we
define a value $R$, and
denote $R'\gets R\oplus h_{\round-1}$, then we have

\begin{align*}
\pr[R_\round = R] &= \pr[h_{\round-1}\oplus R_{\round-1} = h_{\round-1}\oplus R']\\
                   &= \pr[R_{\round-1} = R'],
\end{align*}
which we know to be uniformly random by assumption, thus for every $R'$, we have
$\pr[R_{\round-1} = R']\approx 1/ (2^\ell-1)$ and for every value
$R$, $\pr[R_\round = R]\approx 1/ (2^\ell-1)$, proving $R_\round$ fairness.
(Here $\ell$ denotes the bitlength of $R$.)

\sarahm{What exactly does this prove though? It honestly seems like a bit of a tautology.}
%\saraha{let me know if that'll do now}

For unpredictability, we must show that, unless the adversary is
itself the next leader, it is hard to learn the value of
$\globalrandao_\round$ before it receives $\txrev$.  We have
$\globalrandao_\round=\globalrandao_{\round-1}\oplus\randao_{\round-1}$,
where $\globalrandao_{\round-1}$ is assumed to be known.  By the same argument
about the unpredictability of $\randao_{\round-1}$ as in
Lemma~\ref{lem:strawman-unpredictable}, however, if it belongs to an honest
participant and $H$ is modeled as a random oracle, then the adversary is
unable to predict this value and thus unable to predict
$\globalrandao_\round$.
\end{proof}

Even if the initial value was some constant (e.g., 0) instead of a randomly
generated string, we could still argue that the protocol is fair after the
point that the randomness of at least one honest participant is incorporated
into the beacon. Indeed in this case the value of the beacon would be updated as
$\globalrandao_{\round+1}\gets\globalrandao_{\round}\oplus h_\round$ where
$h_\round$ is random since it belongs to an honest participant, and independent
of $\globalrandao_{\round}$, since it was committed to before round $\round$,
so this new value is random.  This assumption is weakened the longer the
beacon is live, so works especially well in settings such as we describe in
Section~\ref{sec:bootstrap}, in which the leader election protocol is used to
bootstrap from one form of consensus (e.g., PoW) to another (e.g., PoS).

\subsection{Our construction: \sysname}\label{sec:sysname}

Now that we have a cheap source of continuous randomness, we combine it with
the privately held beliefs in our straw-man solution in
Section~\ref{sec:straw-man} to give our full solution, \sysname.  The protocol
is summarized in Figure~\ref{fig:construction}.

\begin{figure*}[t]
\begin{framed}
\centering
\begin{description}[itemsep=5pt]
\item[$\setup$:]  In the $\setup$ phase, on input the security parameter
$\usecp$, participants initialize the initial public state of the protocol
$\pubstate$ with an empty list of deposits $\commitlist$.

\item [$\deposit$:] To participate up to round $\round_\maxr$, a participant
commits to a seed $\seed$ by creating a $\deposit$ transaction $\txdep$ that
contains a hash value $H^{\round_\maxr}(\seed)$.  Each broadcast commitment
is added to the list $\commitlist$ maintained in $\pubstate$, and that
participant is considered eligible to be elected leader after some fixed
number of rounds have passed.

\item [$\coord$:] Once enough participants are committed, participants run
a secure coin-tossing protocol to obtain a random value $R_1$.  They output a
new $\pubstate=(\commitlist, R_1)$.  \textbf{This interactive protocol is run
only for $\round = 1$.}

\item [$\reveal$:] For $\round > 1$, every participant verifies their own
eligibility by checking if $H(\globalrandao_\round\oplus \randao_\round)<\target$,
where $\randao_\round=H^{\round_\maxr-\round}(\seed)$ and
$\target=\hashmax/\participantsnumber_\round$. (Here
$\participantsnumber_\round$ is the number of eligible participants; i.e., the
number of participants that have committed a sufficient number of rounds before
$\round$.)  The eligible participant, if one
exists, then creates a transaction $\txrev$ with their data
$\randao_\round$ and broadcasts it to their peers.

\item [$\verify$:] Upon receiving a transaction $\txrev$, participants
extract $h_\round$ from $\txrev$ and check whether or not
$H(\globalrandao_\round\oplus \randao_\round)<\target$ and if $h_\round$ matches the
value committed; i.e., if $H^{\round}(h_\round) =
\commitlist[i]$.  If these checks pass,
then the public randomness is updated as $\globalrandao_{\round+1}\gets
\globalrandao_\round\oplus\randao_\round$ and they output $1$, and otherwise
the public state stays the same and they output $0$.
\end{description}
\end{framed}
\caption{Our \sysname protocol.}
\label{fig:construction}
\end{figure*}

\begin{theorem}\label{thm:caucus}
If $H$ is a random oracle and $R$ is initialized as a uniformly random value,
then \sysname is a secure leader election protocol; i.e., it satisfies
liveness, fairness, delayed unpredictability (i.e., where
$\barrier$ is the step at which the elected leader reveals their proof), and
private unpredictability (where $\barrier$ is the step at which the
randomness $\globalrandao_\round$ is fixed).
\end{theorem}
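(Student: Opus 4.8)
\textbf{Proof proposal for Theorem~\ref{thm:caucus}.}

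The plan is to observe that \sysname is precisely the composition of the two constructions already analyzed: the randomness-extension beacon of Section~\ref{sec:randao} (Theorem~\ref{thm:randao}) supplies the sequence $\globalrandao_\round$, and this sequence is then folded into the straw-man hash-chain solution exactly as in Section~\ref{sec:randomness} (Theorem~\ref{thm:plus-randomness}). So rather than re-derive everything, I would reduce the four properties of \sysname to the conclusions of those two theorems. First I would note that by hypothesis $\globalrandao = R_1$ is initialized as a uniformly random value (via the secure coin-tossing protocol in $\coord$); Theorem~\ref{thm:randao} then gives, by induction on the round, that every $\globalrandao_\round$ satisfies liveness, unbiasability, and unpredictability. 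In particular $\globalrandao_\round$ is a uniformly distributed value that is unknown to any PT adversary until the point at which it is fixed (i.e., until the round-$(\round-1)$ leader's $\txrev$ is revealed), and it is fixed only after the adversary has had to commit to its own $\randao_\round = H^{\round_\maxr - \round}(\seed)$ — here the ``wait some number of rounds after committing'' rule is what guarantees that an honest reveal has been folded in between the adversary's commitment and the round in question.

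Next I would invoke Theorem~\ref{thm:plus-randomness} with this $\globalrandao_\round$ playing the role of $R_\round$: since $\globalrandao_\round$ meets the three coin-tossing hypotheses of that theorem, the folded-in protocol — whose $\reveal$ and $\verify$ are literally the ones displayed in Section~\ref{sec:randomness}, with winning condition $H(\globalrandao_\round\oplus\randao_\round) < \target$ — inherits liveness, fairness, delayed unpredictability (with $\barrier$ the step at which the elected leader reveals $\txrev$), and private unpredictability (with $\barrier$ the step at which $\globalrandao_\round$ is fixed). Concretely: liveness holds because after $\round=1$ no coordination is required and the beacon-update fallback $\globalrandao_\round \gets H(\globalrandao_\round)$ ensures some participant is eventually elected; fairness holds because $\randao_\round$ is committed before $\globalrandao_\round$ is fixed, hence independent of it, and XORing with the uniform $\globalrandao_\round$ makes $\globalrandao_\round\oplus\randao_\round$ uniform, so $\pr[H(\globalrandao_\round\oplus\randao_\round)<\target]\approx\target/\hashmax = 1/\participantsnumber_\round$, which extends to $\approx\afrac$ for an adversary holding an $\afrac$ fraction; delayed unpredictability follows from the unpredictability of the random oracle on the (honest, unrevealed) value $\randao_\round$, exactly as in Lemma~\ref{lem:strawman-unpredictable}; and private unpredictability follows from the unpredictability of $\globalrandao_\round$ together with the independence of $\randao_\round$ and $\globalrandao_\round$.

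The one subtlety worth spelling out, and the main obstacle, is the circularity in the argument: Theorem~\ref{thm:plus-randomness} assumed $R_\round$ was produced by an \emph{external} beacon, whereas in \sysname the value $\globalrandao_\round$ is produced by the very same participants (and same hash chains) used for leader election, so one must check that using $\randao_\round$ both to update the beacon ($\globalrandao_{\round+1}\gets\globalrandao_\round\oplus\randao_\round$) and as the eligibility witness does not let an adversary correlate the two. I would handle this by a careful inductive invariant: at the start of round $\round$, conditioned on the adversary's entire view (all commitments $\commitlist$, all past reveals), $\globalrandao_\round$ is statistically close to uniform and independent of every honest participant's not-yet-revealed hash-chain value for round $\geq\round$ — this is exactly the content of Theorem~\ref{thm:randao}'s inductive step, and it is preserved because the honest $\randao_\round$ folded in at round $\round$ was itself unpredictable (a fresh random-oracle output) and because an adversarial $\randao_\round$, being committed before $\globalrandao_\round$ was fixed, is a constant w.r.t.\ the beacon. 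With that invariant in hand, every appeal to independence in the fairness and unpredictability arguments above goes through verbatim, and the four properties follow.
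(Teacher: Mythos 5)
Your proposal is correct and follows essentially the same route as the paper: the paper's proof is precisely the composition you describe, arguing that the security of the random beacon (Theorem~\ref{thm:randao}) supplies the hypotheses needed for Theorem~\ref{thm:plus-randomness}, so \sysname inherits all four properties. Your extra inductive invariant handling the circularity (the beacon being built from the very hash-chain values used as eligibility witnesses) is a point the paper's terse proof passes over silently and only strengthens the argument; the sole item of the paper's proof you omit is the closing remark that the tolerated adversarial fraction is $t/n$ (the PVSS threshold) during beacon initialization, with no coordination constraint afterwards.
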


\begin{proof}
This theorem holds by the security of the components of the system, and in
particular because the security of the random beacon
(Theorem~\ref{thm:randao}) implies that the necessary conditions for the
security of the randomized solution (Theorem~\ref{thm:plus-randomness}) hold.
We thus get all of the same properties as the solution proven secure in that
theorem.  In terms of the fraction $\afrac$ of malicious participants that we
can tolerate, it is $t / n$, where $t$ is the threshold of the PVSS scheme
used to initialize the random beacon (see Appendix~\ref{appendix:scrape} for
more details).  After the random beacon is initialized, if implemented in
Ethereum our protocol then requires no further coordination, so can tolerate
any fraction.  In the context of proof-of-stake, however, we would still need
to assume an honest majority.
\end{proof}

Finally, in addition to the possibility that no participant ``wins'' a given
round, it is also possible that there will be two or more winners in a round.
In a pure leader election protocol, we could simply elect all winning
participants as leaders.  In a setting such as proof-of-stake, however, being
elected leader comes with a financial reward, and conflicts may arise if two
winners are elected (such as the nothing-at-stake problem discussed in
Section~\ref{sec:defns-pos}).  While we leave a full exploration of this as
future work (as again this issue does not arise in the abstract context of
leader election), one potential solution (also suggested by
Algorand~\cite{algorand}) for electing a single leader is to
require all participants to submit their
winning pre-image $h_{\round}$ before some time $t_{\round}$, and then select
as the winner the participant whose pre-image $h_{\round}$ has the lowest bit
value.

\section{Implementation and Performance}\label{sec:impl}
In this section, we present the implementation of \sysname as a suite of
Ethereum smart contracts, our experience running the protocol on a private
Ethereum test network,\footnote{Our full smart contract cannot be deployed on 
    the test network as blocks support only up to 4.7 million gas, whereas the
    production network supports 6.7 million gas.} 
and the technical difficulties we faced.
The main reason for doing so is to get an idea of the cost of such a protocol.

\subsection{Implementation details}\label{sec:impl-details}

We implemented the \sysname protocol as three smart contracts written in the 
Solidity language; together, they are approximately 950-1,000 lines of code.\footnote{ https://www.dropbox.com/s/n89xq4ta17a4n0p/Dealer.sol}\footnote{https://www.dropbox.com/s/yde7e5ja55wq1sb/Dealer\_voting.sol}
The first two contracts, \dealercon and \scrapecon, implement the SCRAPE 
random beacon of Cascudo and David~\cite{scrape} (which we describe in more
depth in Appendix~\ref{appendix:scrape}), and \caucuscon 
implements our full leader election protocol.
Furthermore, there is an optional \texttt{LocalCrypto}\footnote{https://www.dropbox.com/s/ajzgdw7v59ms43k/LocalCrypto.sol} contract adopted from \cite{mccorry2017smart} that contains the code for creating the zero knowledge proofs.

To set up \sysname, a single party creates the \caucuscon contract (Step~1) 
that establishes deadlines $t_{reg}, t_{com}, t_{scr}$  the maximum 
round time $t_{caucus}$, and the required registration deposit $d$.
Next, \caucuscon creates the \scrapecon contract (Step~2) before beginning 
the registration phase.  Participants run $\deposit$ (Step~3) to deposit $d$ 
coins, and also indicate whether or not they want to participate 
in SCRAPE.  Finally, registration is closed after $t_{reg}$ and \scrapecon 
creates a new \dealercon contract (Step~4) for each participant that 
indicated interest.  All dealers are set as participants in other dealer 
contracts (Step~5) before \scrapecon transitions to the commit phase and 
all \dealercon contracts transition to the distribution phase (Step~6). 

To now initialize the random beacon using SCRAPE, each dealer runs the secret
sharing phase of SCRAPE and publishes the resulting encrypted shares and
discrete log equality proofs (DLEQ) to the contract, which it then processes
(Step~7).  In our fully on-chain variant of \sysname, the \dealercon contract
also runs all necessary verifications before storing both the committed and 
encrypted shares (Step~8a).  

Because this type of on-chain verification is financially expensive (as it 
requires the contract to execute complex operations with high gas costs), we 
also consider a voting-based variant that adopts the techniques of
RandShare~\cite{randhound} to reduce on-chain costs but preserve liveness.  
In this variant, the contract stores the shares and proofs but doesn't perform 
any verification itself.  Instead, participants perform the verification 
themselves off-chain, and then cast a vote to indicate their support
accordingly (Step~8b).  The contract accepts encrypted and committed share 
once more than $n/2$ participants have voted that the proof is correct.

The \scrapecon contract transitions to the recovery phase after $t_{com}$, 
and is responsible for transitioning all dealer contracts that completed the 
distribution phase to the recovery phase (Step~9).  \caucuscon forfeits the 
deposit of any dealers that failed to finish the distribution phase.  In this
next phase, more than $n/2$ participants must publish their decrypted share 
and a DLEQ that it corresponds to the encrypted version in the respective 
\dealercon contract.  In our on-chain variant, the \dealercon contract 
verifies the DLEQ before storing the decrypted share (Step~10a).  In our
voting variant, the contract stores it regardless and then waits for an
appropriate number of positive votes about its validity (Step~10b).  Once a
sufficient number of shares (and votes, if applicable) are available, the
contract combines the shares to compute the dealer's secret (Step~11).  The 
\scrapecon contract then combines all dealer secrets to compute the final 
beacon $R_{\round}$ and transitions \caucuscon to the election phase (Step~12). 
\caucuscon forfeits the deposits of any participants that did not publish 
their decrypted share before $t_{src}$. 

At each round during the election phase, participants can run $\reveal$ to
send a value $h_\round$ to the contract.  The contract runs $\verify$ to check
that the participant has indeed won the election before computing the next 
beacon value $R_{\round+1} \gets R_{\round} \oplus h_{\round}$ and starting 
the next round (Step~13).  If there is no winner after $t_{caucus}$, then
the next value is computed as $R_{\round+1}\gets H(R_\round)$.

\subsection{Gas costs for cryptographic operations} \label{sec:overhead}

%\sarahm{I've tried to rewrite this section to make it more clear what the
%point is but am still not convinced.  We never talk about curve operations
%anywhere else and it feels quite out of place.  We also don't say how the
%crypto is even implemented?  What happened to the local crypto contract?}

%Related work \cite{dong2017betrayal,mccorry2017smart} highlights that  Elliptic curve point addition (ADD) and multiplication (MUL) are the two most costly operations for implementing cryptographic protocols in smart contracts. 
All cryptographic operations are implemented over the \verb#secp256k1# curve 
in Solidity.  The creation code for the DLEQ proofs needed for SCRAPE is 
distributed using the optional \texttt{LocalCrypto} contract and the DLEQ 
verification code is distributed in the \texttt{Dealer} contract. 
We briefly explore the number of multiplications and additions required for 
verifying the correct execution of the protocol (on-chain verification) before 
highlighting the gas that can be saved using the voting approach (off-chain 
verification). 

We focus on the gas costs for on-chain verification in both the distribution 
and recovery phases (Steps~8a and~10b respectively), in which participants send 
the contract DLEQ proofs. 
Each DLEQ proof involves 4 MUL and 2 ADD operations which is approximately 1.5-1.4M (million) gas.
Validating the $k$ committed shares published by the Dealer in Step~8a requires $k$ MUL and $k-1$ ADD operations which is approximately 2.1M gas when $k=6$ and the gas cost increases linearly for each additional committed share. 
Finally, combining all decrypted shares in Step~11 involves $k$ MUL and $k-1$ ADD operations which is approximately 550k gas when $k=7$. 

We focus on the gas costs for participants casting votes on whether the proofs are correct in both the distribution and recovery phases (Steps~7 and~9 respectively).
This requires $n^2-n$ transactions, since $n$ is the number of 
\texttt{Dealer} contracts and $n-1$ is the number of participants in each 
contract. 
A single vote to confirm that all encrypted and committed shares distributed
by the dealer are well formed requires approximately 42k gas.
On the other hand, a participant (and the respective dealer) must vote on
whether or not decrypted shares distributed by other participants are 
well formed.
A single vote on all 6 decrypted shares requires approximately 400k gas.
Although the voting approach involves an additional 76 transactions overall, 
it also requires 104M less gas.

\subsection{Financial costs} \label{sec:financial}

\begin{table*}[t]
\centering
\begin{tabular}{llS[table-format=9.0]S[table-format=3.2]S[table-format=3]S[table-format=8.0]S[table-format=2.2]S[table-format=3]}
\toprule
Step & Purpose & \multicolumn{3}{c}{On-chain protocol} & 
    \multicolumn{3}{c}{Voting protocol}\\
\cmidrule(lr){3-5} \cmidrule(lr){6-8}
&& {Cost (gas)} & {Cost (\$)} & {\# txns} & {Cost (gas)} & {Cost (\$)} 
    & {\# txns}  \\
\midrule
\textbf{Unit costs} & & & & & & & \\
\midrule
1 & Create \caucuscon & 6605787 & 12.3 & 1 & 5879425 & 10.9 &
    1 \\
%Create Crypto contract & $2,839,140$ & $2.23$ & 1 & 2,839,140 & 2.23 & 1 \\
2 & Create \scrapecon & 4271053 & 7.94 & 1 & 3562713 & 6.63 & 1 \\
3 & Register a participant & 167374 & 0.31 & 7 & 167374 & 0.31 & 7 \\
4 & Create \dealercon & 2652412 & 4.93 & 7 & 2652412 & 4.93 & 7 \\ 
5 & Set participants in all \dealercon  & 22613 & 0.04 & 1 & 29928 & 0.06 & 1 \\
6 & Transition \dealercon to distribution stage  & 87375 & 0.16 & 7 & 87245 & 0.16 & 7 \\
7 & Process encrypted shares in \dealercon & 1557849 & 2.90 & 42 & 255271 & 0.47 & 42 \\
8a & Validate committed shares for a dealer & 2166942 & 4.03 & 7 & {\dash}  & {\dash}  & {\dash}  \\
8b & Process votes on encrypted shares & {\dash} & {\dash} & {\dash} & 42671 & 0.08 & 42 \\
9 & Transition \scrapecon to recovery stage & 53269 & 0.10 & 1 & 61063 & 0.11 & 1 \\
10a & Process decrypted shares in \dealercon & 1413342 & 2.63 & 42 & 187640 & 0.35 & 42 \\
10b & Process votes on decrypted shares & {\dash} & {\dash} & {\dash} & 399127 & 0.74 & 42 \\
11 & Join all decrypted shares in a dealer's contract & 552034 & 1.03 & 7 & 552034 & 1.03 & 7 \\
12 & Transition \caucuscon to election stage & 49267 & 0.09 & 1  & 27202 & 0.05 &  1 \\
13 & Elect a leader (per round) & 53651 & 0.10 & 1 & 53651  &  0.10 & 1 \\
\midrule
\textbf{Aggregate costs} & & & & & & & \\
\midrule
& Create all contracts & 29443724 & 54.8 & 10 & 28009022 & 52.1 & 10 \\
& Initialize beacon & 145731246 & 271.0 & 115  & 42922542 &  79.8 & 192 \\
& Elect a leader (per round) & 53651 & 0.10 & 1  & 53651 & 0.10 & 1 \\
\midrule
\textbf{Total costs} && 175228621 & 325.9 & 126  & 70985215 & 132.0 & 202 \\
\bottomrule
\end{tabular}
\caption{\label{fig:financialcost} Breakdown of the gas costs for \sysname 
leader election with 7 participants, for both the on-chain and voting-based
variants.  The costs in USD (\$) were computed using a conversion rate of 1
ether = \$465 and 1 gas = 4 GWei.}
\end{table*}

Table \ref{fig:financialcost} presents the estimated financial costs for 
setting up and running \sysname on the Ethereum network, both in terms of the
gas costs associated with each individual step described in
Section~\ref{sec:impl-details}, as well as the
aggregate costs for each phase of the protocol.  The conversions used were
taken as their real-world values in December 2017 (\url{ethgasstation.info}).

The total cost for creating all contracts is \$54.8 in the on-chain variant
and \$52.1 in the voting-based variant.  This cost is incurred only once, as
the contract can be stored once in the blockchain and \texttt{DELEGATECALL} 
can re-use this code to create a new instance of each contract. 

The total cost for executing the SCRAPE protocol is \$271 in the on-chain
variant and \$79.8 in the voting-based variant.  After this initial round,
the total cost for running a single round of \sysname is then \$0.10, if 
the contract must perform only a single hash to verify $h_\round$ against 
the committed head of the hash chain; i.e., if $\round_\maxr - \round = 1$.  
The cost for each additional hash is 128 gas, which is roughly \$0.0001.

Overall, the initial cost to deploy \sysname's initial contracts is around \$50.  
If a secure random beacon is required from the very start, then the cost can be kept under \$80 using the voting-based approach.  
On the other hand, if a secure random beacon is not required immediately (i.e. the beacon is used to bootstrap a proof of stake protocol as we describe in Section~\ref{sec:bootstrap}), then it is possible to avoid SCRAPE altogether.
However, relying solely on Caucus assumes that at least one honest participant has contributed to the beacon before bootstrapping begins. 
Either way, the cost to continue the random beacon and thus elect new leaders is only \$0.10.

\section{Improvements and Extensions}\label{sec:improvements}

In this section we discuss how the basic \sysname protocol presented in
Section~\ref{sec:construction} can be changed to detect misbehaving 
participants, and how it can be integrated with existing blockchains based 
on proof-of-work (PoW) to enable proof-of-stake (PoS).

\subsection{Detecting bias}

In our current protocol specification, the winner of a leader election could
potentially bias the beacon by simply not revealing the value $h_{\round}$.
If the contract were updated to store a list of the recent values produced by
the random beacon and participants were required to periodically reveal a list
of pre-images, then the protocol could retroactively detect this behavior.  In
particular, it could check for each beacon value $R_k$ and for
all participants $i$ with value $h_k^{(i)}$ at the appropriate round whether 
or not $H(h_k^{(i)}\oplus R_k) < \target_k$, and also whether or not the
participant submitted $h_k^{(i)}$.  If the inequality holds but the value was
not submitted, then the contract can punish the participant appropriately by,
for example, taking their security deposit.

%\subsection{Keeping elections in lock-step}

%While all participants have an incentive to quickly broadcast a new 
%transaction upon winning the leader election, this is not desirable if the
%goal is to keep the protocol limited to one leader election per block, as
%would be the case in using leader election to build a PoS-based blockchain.
%Currently, since \sysname is implemented as an Ethereum smart contract in
%which leader elections are independent from the block creation process, we
%do not prevent this type of behavior, which could lead to network spam (in the
%form of chains of unconfirmed transactions).  

%To overcome this issue, participants could be required to include the $i$-th 
%block hash after the most recent leader election in their transaction data, 
%where $i$ is chosen to prevent block re-organization attacks.

\subsection{Bootstrapping PoS from PoW} \label{sec:bootstrap}

The Ethereum Foundation plan to switch Ethereum's consensus protocol from
proof of work to proof of stake \cite{casper}, which motivates the need 
for a protocol that can facilitate this transition.  Our implementation 
demonstrates that \sysname can be set up as a smart contract to satisfy this 
transition at a reasonable cost, and because it facilitates an open-membership 
policy that allows anyone to register and participate in leader elections.  A 
proof-of-stake protocol could then take over as the new consensus protocol 
after numerous leader elections in \sysname, which would also increase the
possibility that a single honest participant had contributed to the random
beacon (and thus meaning we could avoid the expensive step of initializing it
using a coin-tossing protocol). 

The missing part of \sysname in order to present a full PoS protocol is a 
reward and punishment scheme that solves the nothing-at-stake and long-range 
attacks described in Section~\ref{sec:defns-pos}.  We have intentionally left 
this problem for future work to instead focus on rigorously analyzing our 
leader election protocol.

\section{Conclusion and Open Problems}
 %don't solve proof of stake but neither do other protocol
 %need to tackle the incentive problem as this what we rely on for the 'fork' problem
 %at least we solved the leader election problem properly with proof etc.. other papers
 %try to tackle the whole problem at once, too hard
In this paper, we presented \sysname, a leader election protocol that, among
other potential applications, could be used to bootstrap a PoW-based blockchain 
to a PoS-based one.  To analyze \sysname, we presented cryptographic notions 
of security specific to the setting of blockchains; namely, unpredictability, 
fairness and liveness.  We then proved that \sysname satisfies strong variants
of these security properties, while requiring only
minimal coordination between the participants.
%Our protocol is initialized using a collective coin flipping protocol,
%which is then extended into a \emph{cheap} random beacon.
% by having each elected leader reveal, at each round, a secret that they committed to
% at the begining of the protocol,
% in order to prove
% their eligibility.
While other papers try to solve the proof-of-stake problem all at once, we 
instead focused on the security properties of the leader election, which is a 
well studied problem in the distributed system literature but one that needs a 
different formalization in the context of blockchains.  More generally, it is 
difficult to compare the different PoS protocols that have been proposed, and 
we believe our model makes steps towards providing a formal model that allows
for meaningful comparisons.

Due to our focus on leader election, we leave as a future work an economic 
incentivization scheme that would be needed to provide a full solution to the 
PoS problem, as proof-of-stake problem consists mostly of a secure leader 
election combined with an incentive-compatible reward and punishment scheme.

{\footnotesize
\def\shortbib{1}
\bibliographystyle{abbrv}
\bibliography{abbrev2,crypto,misc}
}

\appendix
\section{SCRAPE}\label{appendix:scrape}

The SCRAPE protocol~\cite{scrape} instantiates a publicly verifiable secret
sharing (PVSS) scheme in order to achieve a publicly verifiable random beacon.
As discussed in Section~\ref{sec:construction}, we have decided to incorporate
SCRAPE into \sysname due to its low computational complexity and its
compatibility with public ledgers.

Briefly, a $(n,t)$-threshold secret sharing scheme~\cite{shamir79} consists of
a set of $n$ participants $(P_1,\dots,P_n)$ and a dealer $D$ who distributes
shares $(s_1,\ldots,s_n)$ of a secret $s$ among the participants in such a way
that $t$ or more of them can recover the secret $s$.  A publicly verifiable 
secret sharing (PVSS) scheme~\cite{FOCS:CGMA85,C:Schoenmakers99} has the 
additional property that anyone can verify that the shares have been computed 
correctly and allow the participants to recover a valid secret.  This means
that PVSS schemes consist of a \emph{distribution} phase (in which encrypted 
shares are distributed and their correctness can be publicly verified) and a
\emph{recovery} phase (in which encrypted shares are decrypted and revealed,
and can be recovered in the event that some of the initial participants have
gone offline).

\subsection{The original SCRAPE protocol}

Briefly, the original SCRAPE protocol proceeds as follows: the parameters 
consist of
generators $g$ and $h$ of a group $\mathbb{G}$ of prime order $q$, as well as
a hash function $H(\cdot)$ modeled as a random oracle. $C$ is also defined to be 
the linear error correcting code corresponding to the $(n, t)$-threshold Shamir 
secret sharing scheme, and let $C^{\bot}$ be its dual code.

To register, each participant $P_{i}$ generates a secret key 
$\sk_i \randpick \F_q$ and a public key $\pk_{i} \gets h^{\sk_{i}}$, and
publishes $\pk_i$ to the ledger.

In the distribution phase, the dealer $D$ picks a secret $s$ and computes the
shares $(s_1,\ldots,s_n)$ of the secret, where $n$ is the number of registered
participants.  Then, each dealer $D_{i}$ publishes a list of encrypted shares 
$(\hat{s}_1,\dots,\hat{s}_n)$, where $\hat{s}_{i} = \pk_{i}^{s_{i}}$, a list of 
committed shares $(v_{1},\dots,v_{n})$, where $v_{i} = h^{s_{i}}$, and a list of 
discrete log equality proofs $DLEQ(g,v_{i},\pk_{i},\hat{s}_{i})$ that each pair of encrypted and committed shares correspond to the same discrete log.  
To verify these values, anyone can verify the list of DLEQ proofs before sampling random codewords $c^{\bot} = (c^{\bot}_{1},...,c^{\bot}_{n})$ 
of the dual code $C^{\bot}$, corresponding to the instance of Shamir's 
$(n,t)$-threshold secret sharing used by $D$, and verify that 
$\prod^{n}_{i=1} v_{i}^{c_{i}^{\bot}} = 1$.

In the recovery phase, each participant decrypts their share 
as $\tilde{s}_i = \hat{s}^{1/\sk_{i}}_{i}$ and publishes a new proof 
$DLEQ(h,pk_{i},h^{s_{i}},\hat{s}_{i})$ that the decrypted share corresponds 
to the encrypted share.  These proofs can again be verified by anyone.  
Finally, the dealer's secret $S$ can be recovered once half the shares are 
decrypted using Lagrange interpolation (where 
$\lambda_{i} = \prod_{i \neq j} \frac{j}{j-i}$ are the Lagrange coefficients);
i.e., by computing

\[
\prod_{P_{i} \in Q}^{n} (\tilde{s}_{i})^{\lambda_{i}} =	\prod_{P_{i} \in
Q}^{n} h^{p(i)} = h^{s}.
\]

More details and a proof of security (assuming DDH and modeling $H(\cdot)$ as
a random oracle) can be found in the original SCRAPE paper~\cite{scrape}.

\subsection{Our modified protocol}

To make the abstract SCRAPE protocol compatible with our Ethereum-based
implementation required several modifications that we highlight here.

First, as discussed in Section~\ref{sec:impl-details}, we included an 
additional timer $t_{com}$ that allows the contract to dictate the end of 
the distribution phase, whereas in the original SCRAPE protocol the recovery
phase is said to start after $n/2$ dealers have finished the distribution 
stage.  This was done in order to provide a grace period for all dealers to 
publish their commitment before transitioning to the recovery stage.  We still
require that $t = n/2$.

Second, the original SCRAPE protocol proposes as an optimization that the 
dealer could send an additional commitment to the full secret $S$ in the
distribution phase, and then open it in the recovery phase; this would mean
the other participants wouldn't have to recover it.  There is no proposed 
method in the paper, however, for the
dealer to prove that this additional commitment is the same secret that would
ultimately be derived from the decrypted shares.  Without such a proof, the 
dealer could commit to a second secret $S'$ and then choose to either allow 
the participants to recover $S$ or to reveal $S'$; this would in turn allow it
to bias the beacon.  To avoid this type of behavior, we chose not to include
this optimization.

Third, the original SCRAPE protocol suggests picking a random codeword in
order to perform verification.  In our fully on-chain variant, verification is
performed by the contract, which has no source of randomness on which to draw.
Instead then, it uses as a random seed the hash of the most recent block to do
this sampling.

Finally, all DLEQ proofs in SCRAPE rely on a single challenge $e$, which is 
the hash of all commitments, ciphertexts, and other random factors.  
It is not feasible to include all this information in a single transaction as a Solidity function can only support 16 local variables (including its parameters). 
Instead, the dealer sends one proof per transaction, along with
the value $e$ that the dealer has computed locally.
The contract can verify the proof using the dealer's supplied $e$ before storing the committed and encrypted share.
This is necessary to allow the contract to perform both the verification and storage in a single transaction. 
%(meaning it can perform 
%the storage and verification of the proof in one fell swoop, as opposed to 
%having to store the proof and then waiting for a later transaction to tell it 
%to begin verification). 
After verifying all proofs, the contract can re-compute $e'$ using the stored committed/encrypted shares before proceeding to the reveal phase if $e = e'$. 
Otherwise, it halts in a failed state. 

%The contract can recompute $e'$ after verifying all proofs )After verifying all proofs, at which point all 
%committed and encrypted shares have been stored, it then recomputes $e'$ and 
%proceeds to the reveal phase only if $e = e'$.  Otherwise, it halts in a 
%failed state.

%\input{pvss}
%\input{eligibility}

\end{document}